\newcommand{\R}{\mathbb{R}}
\newcommand{\Psib}{\bm{\Psi}}
\renewcommand{\S}{{\cal S}}
\newcommand{\F}{{\cal F}}
\newcommand{\T}{{\cal T}}
\newcommand{\HS}{{Hilbert-Schmidt }}
\renewcommand{\vec}{\mbox{vec}}
\newcommand{\kronecker}{\otimes}
\newcommand{\bpsi}{\boldsymbol{\psi}}
\DeclarePairedDelimiter\ceil{\lceil}{\rceil}
\newtheorem{theorem}{Theorem}
\newtheorem{remark}{Remark}
\newtheorem{proposition}{Proposition}
\begin{document}

\def\spacingset#1{\renewcommand{\baselinestretch}%
{#1}\small\normalsize} \spacingset{1}


\newcommand{\blind}{1}

\if1\blind
{
\title{Locally sparse function-on-function regression}
\author{Mauro Bernardi and Antonio Canale, \\ 
{\small Dipartimento di Scienze Statistiche},\\ {\small Università degli Studi di Padova, Padova, Italy,}\\  
and\\
Marco Stefanucci\\
{\small Dipartimento di Scienze Economiche, Aziendali, Matematiche e Statistiche ``Bruno de Finetti'',}\\ 
{\small Università degli Studi di Trieste, Trieste, Italy}
  }
\date{}
  \maketitle
} \fi

\if0\blind
{
  \bigskip
  \bigskip
  \bigskip
  \begin{center}
    {\LARGE\bf Locally sparse function-on-function regression}
\end{center}
  \medskip
} \fi


\bigskip
\begin{abstract}
In functional data analysis, functional linear regression has attracted significant attention recently. Herein, we consider the case where both the response and covariates are functions. There are two available approaches for addressing such a situation: concurrent and nonconcurrent functional models. In the former, the value of the functional response at a given domain point depends only on the value of the functional regressors evaluated at the same domain point, whereas, in the latter, the functional covariates evaluated at each point of their domain have a non-null effect on  the response at any point of its domain. To balance these two extremes, we propose a locally sparse functional regression model in which the functional regression coefficient is allowed (but not forced) to be exactly zero for a subset of its domain.  This is achieved  using a  suitable basis representation of the functional regression coefficient and exploiting an overlapping group-Lasso penalty for its estimation. We introduce efficient computational strategies based on majorization-minimization algorithms and discuss appealing theoretical properties regarding the model support and consistency of the proposed estimator. We further illustrate the empirical performance of the method through simulations and two applications related to human mortality and bidding the energy market.
\end{abstract}

\noindent%
{\it Keywords:}  Functional data analysis; Non-concurrent functional linear model; Overlap group Lasso; 
\vfill

\newpage
\spacingset{1.45} 


\section{Introduction}
\label{sec:intro}
The undergoing  technological advancement enables the collection and storage of high-resolution data that can be modelled as smooth functions (e.g., curves or surfaces). Functional data analysis (FDA) is a branch of statistics that models such data through suitable functional counterparts of successful methods and models developed for standard Euclidean data, such as clustering, regression, and classification \citep{FDA1,IFDA,hsing2015theoretical}.
Functional regression is one of the building blocks of FDA and has received remarkable attention in theory, methods, and applications \citep[see][for a recent review]{morris:2015:functionalregression}.

Herein, we focus on the general case of functional regression with functional responses. We developed a model and related methods for its implementation, representing a bridge between two successful alternative models.
To set up the notation, we assume that for the generic $i$-th statistical unit $(i=1, \dots, n)$, a functional response  $y_i(s)$ with $s \in \cal S$ is available along with $p$ functional covariates $x_{ij}(t)$ ($j=1, \dots, p$ and  $t \in {\cal T}_j$) and $\cal S$ and ${\cal T}_j$ $(j =1, \dots, p)$  subsets of $\mathbb R$, denoting the domains of the functional data $y$ and $x_j$, respectively.  The first available modelling strategy, namely, the concurrent functional linear model, assumes that the functional data are observed on the same domain, i.e., ${\cal T}_j = \cal S$  for $j =1, \dots, p$, and the relation between the response and the predictors is given as
\begin{equation}
y_i(s) = \alpha(s)+\sum_{j=1}^p x_{ij}(s)\psi_j(s)+ e_i(s), 
\label{concurrent}
\end{equation}
where $\alpha(s)$ is a functional intercept, $\psi_j(s)$ are functional regression coefficients, and $e_i(s)$ are functional zero-mean random errors. In the concurrent model, the covariates $x_j$ influence $y(s)$ only through their values $x_j(s)$ at the domain point $s \in \cal S$. As a more general approach, the nonconcurrent functional linear model allows $y_i(s)$ to entirely depend on the functional regressors, and specifically,
\begin{equation}
y_i(s) = \alpha(s)+  \sum_{j=1}^p \int x_{ij}(t)\psi_j(t,s) dt + e_i(s),
\label{nonconcurrent1}
\end{equation}
where $\psi_j(t,s)$ is the kernel function determining the impact of  $x_{ij}$ evaluated at domain point $t \in  {\cal T}_j$ on $y_i(s)$. 
The nonconcurrent model is the default choice when the response variable and the covariates do not share the same domain or, even when sharing the same domain, the value of $y_i(s)$ can be assumed to depend on the functional regressors entirely and not just for their values at $s\in \S$. The nonconcurrent model offers great flexibility, but the flexibility increases complexity, both in terms of interpretation and computation. 
An in-between solution, motivated by applications in which ${\cal T}_j = {\cal T} = {\cal S}$ is a time domain, is represented by the so-called historical functional linear model of \citet{malfait:2003:historical}. This approach restricts the domain of integration of the integral in \eqref{nonconcurrent1} to the set  $\T{(s)} = \{t \in \T: t<s\}$ leading to 
\begin{equation}
y_i(s) = \alpha(s)+  \sum_{j=1}^p \int_{\T{(s)}} x_{ij}(t)\psi_j(t,s) dt + e_i(s).
\label{Malfait}
\end{equation}
Note that this model, for the general point $s \in \S$, can be interpreted as a nonconcurrent model up to point $s$.
Despite providing an interesting intermediate solution, there are many situations in which  the dependence through the interval $\T{(s)}$ is difficult to be justified.

Herein, we introduce a hybrid solution  that combines the simplicity in terms of interpretation of the concurrent and historical models with the flexibility of the nonconcurrent model.  Our goal is to introduce a noncconcurrent functional linear model that allows for local sparsity patterns. Specifically,  we want that $\psi_j(t,s)=0$ for $(t, s)  \in \F_0$ with $\F_0 $ being a suitable subset of $\F$,  thus inducing locally sparse \HS operators $\psi_j$. In defining $\F_0$, we do not consider those regions where $\psi(t,s)$ is zero because the kernel changes its sign or where it is tangential to the plane 0. A formal definition of $\F_0$ used henceforth is thus
\[
\F_0 = \{ (t,s) \in \F: \psi(t,s)=0, 
\exists \bar{N}_{} \subseteq  B_\epsilon((t,s)); \, \mu( \bar{N}_{}  )>0
\, \mbox{and}\, \psi(t',s') = 0, \, \forall \, (t',s') \in \bar{N}, \forall \epsilon >0
\},
\]
where $B_\epsilon(t,s)$ is a ball of radius $\epsilon >0$ of the point $(t,s)$ and $\mu(\cdot)$ is the Lebesgue measure. 
Note that, although in \eqref{Malfait} the region $\F_0$ is fixed \emph{a priori} and set equal to $\F_0 = \{(t,s)\in \F : t \geq s\}$, we do not require any specific sparsity pattern, rather we learn it from the data, as discussed in the following sections. 

Different approaches to sparsity and regularization have been used in regression models for functional data, but none of them can be readily adapted for our purpose since they are all defined in a simpler function on scalar regression setting, i.e.,  $y_i$ are scalar responses. 
%
%
 \citet{lee:park:2012:sparse}, for example, after representing the functional regression coefficient with a splines basis expansion,  introduced a Lasso-type penalty for basis coefficients. The proposed solution has the great benefit of regularizing the estimator of  the functional regression coefficients and enjoys nice asymptotic properties. However, several zeroes in the basis function coefficient do not map to a zero in the functional object represented by the basis, and thus, using this approach would not necessarily induce sparsity in the functional coefficients. 
 \citet{james:etal:2009:functional} proposed a model with an interpretable functional regression coefficient that can be exactly zero, flat, and different from zero, or linear in local areas of its univariate domain. This is achieved by inducing sparsity in the general $D$-th derivative of the functional regression coefficient using a constant basis expansion and an $\ell_1$ penalization.
On the surface, this solution seems intuitive and successful, but a zero on a general subregion of the regression function requires grid points that fall into the region to be simultaneously zero, which the plain $\ell_1$ regularization does not warrant in general. In addition, in our setting, the regression coefficient is not a univariate curve but a bivariate surface, and  the use derivatives in two dimensions is less straightforward to apply. 
\citet{lin2017locally} proposed a smooth and locally sparse estimator of the coefficient function based on the combination of smoothing splines with a functional smoothly clipped absolute deviation  (SCAD) penalty of \citet{fan2001variable}.   \citet{zhou2013functional} proposed a two-stage sparse estimator exploiting the properties of B-spline  basis expansion, where an initial estimate is obtained using a Dantzig selector \citep{dantzig}, and the spline coefficient is later refined using a group adaptation of the SCAD penalty. 
The only contribution dealing with a function-on-function situation is the recent manuscript by \citet{vantinieamici} who estimated sparse functional coefficients  through a functional version of the Lasso penalty.

Our proposed solution exploits a B-spline local property  similarly to  \citet{zhou2013functional}, but directly relies on minimizing a suitable objective function, as discussed in Section~\ref{sec:mod}. This objective function  includes an overlapping group-Lasso penalty \citep{jenatton2011structured}  that  ensures  the desired sparsity in  $\psi_j$. It is minimized using a fast and reliable numerical strategy proposed in Section~\ref{sec:comp}. The properties of the induced estimator are discussed in Section~\ref{sec:properties}. A detailed simulation study to evaluate the empirical performance of the proposed model compared to that of the state-of-the-art models is presented in Section \ref{sec:sim}. The results show that the proposed model outperforms other models in estimating both the region of sparsity and the value of the functional regression coefficient where it is different from zero. In Section~\ref{sec:app}, we specify the model in functional time-series settings and analyze two datasets related to  human mortality and bidding in energy markets. Our analysis suggests that bridging between concurrent and nonconcurrent functional models results in better performance both in terms of goodness-of-fit and qualitative interpretation of the results.
Section \ref{sec:discussion} concludes the paper.
\section{Locally sparse functional model}
\label{sec:mod}
%

Without loss of generality, we consider the case in which the functional data $y_i$ are centered in zero, and  a single functional covariate $x_i$ is available for $i =1, \dots, n$, so that $\alpha(s)=0$ for all $s \in \S$ and $p=1$. We also assume that $\psi$ is bounded and defined on a compact domain $\F$. Thus,  \eqref{nonconcurrent1} becomes
\begin{equation}
y_i(s) = \int x_i(t)\psi(t,s) dt + e_{i}(s).
\label{eq:modelnointercept}
\end{equation}
The proposed locally sparse functional regression (LSFR) model relies on the introduction of a specific basis representation for the kernel $\psi$ and the minimization of a suitable objective function. These are described in the next two sections.

\subsection{Kernel basis representation}
\label{sec:kerbasis}
We employ the common  FDA  practice of representing the functional objects using of basis expansions. Specifically, we select two bases in $L_2$, e.g., $\{\theta_l(s), l=1,\dots,L\}$ and $\{\varphi_m(t), m=1,\dots,M\}$, 
where each $\theta_l$ is defined on $\cal S$, each $\varphi_m$ is defined on $\cal T$, and the number of basis is  $L$ and $M$, respectively. Exploiting a tensor product expansion of these two, we represent the kernel $\psi$ in \eqref{eq:modelnointercept} as
\begin{equation*}
\psi(t,s) = \sum_{m=1}^M\sum_{l=1}^L\psi_{ml} \varphi_m(t) \theta_l(s),
\end{equation*}
alternatively, in matrix form as
\begin{eqnarray}
\psi(t,s) & = & (\varphi_1(t), \dots,  \varphi_M(t)) 
\left(
\begin{matrix}
\psi_{11} & \cdots & \psi_{1L} \\
\vdots & \ddots& \vdots \\
\psi_{M1} & \cdots & \psi_{ML} 
\end{matrix}\right)
\left(
\begin{matrix}
\theta_{1}(s)\\
\vdots \\
\theta_{L}(s)
\end{matrix}\right) =
{\bm \varphi}(t)^T  \bm{\Psi} {\bm \theta}(s).
\label{eq:tensorproduct}
\end{eqnarray}
where $\psi_{ml} \in \R$ for $l=1,\dots,L$ and $m=1,\dots,M$.

To achieve the desired sparsity property in $\psi(t,s)$, similarly to \citet{zhou2013functional}, first assume that  the elements in equation \eqref{eq:tensorproduct} are  B-splines \citep{deboor} of order $d$. A B-spline of order $d$ is a piecewise polynomial function of degree $d-1$ and is defined by a set of knots, which represent the values of the domain where the polynomials meet. Based on \eqref{eq:tensorproduct}, $\{\theta_l(s), l=1,\dots,L\}$ and $\{\varphi_m(t), m=1,\dots,M\}$ are  B-splines of order $d$ with $L-d$ and $M-d$ interior knots, respectively, and two external knots each.  

Suitable zero patterns in the B-spline basis coefficients of $\Psib$ induce sparsity of $\psi(t,s)$. Let $\tau_1 < \dots< \tau_m< \dots  < \tau_{M-d+2}$ and $\sigma_1 < \dots <\sigma_l < \dots < \sigma_{L-d+2}$ denote the knots defining the tensor product splines in \eqref{eq:tensorproduct}, with $\tau_m \in \T$ and $\sigma_l \in \S$. For  $m=1, \dots, M-d+1$ and $l=1,\dots, L-d+1$ let $\F_{m,l} \in \F$ be the rectangular subset of $\F$ defined as
\begin{equation}
\F_{m,l} =  (\tau_{m}, \tau_{m+1}) \times (\sigma_{l}, \sigma_{l+1}).
\label{eq:nullregion}
    \end{equation}
Hence, to obtain $\psi(t,s) = 0$ for each $(t,s) \in \F_{m,l}$, it is sufficient that all the coefficients $\psi_{m',l'}$ with  $m'=m, \dots, m+d-1$ and $l'=l, \dots,l+d-1$ are  jointly zero.
 In general  $\psi(t,s)$ equals zero in the region identified by two pairs of consecutive knots if the related  $d \times d$ block of coefficients of $\Psib$ is entirely set to zero.
This  suggests that $\Psib$ should be suitably partitioned in several blocks of dimensions $d \times d$ on which a joint sparsity penalty is induced. This is discussed further in the next section.

\subsection{Sparsity-inducing norm}

Consistent with the above discussion and with successful approaches in statistics and machine learning,  we  minimize an objective function having the following form:
\begin{equation}
\frac{1}{2}\sum_{i=1}^n \int \left( y_i(s) - \int x_i(t) \psi(t,s) dt \right)^2 ds + \lambda \Omega(\Psib),
\label{eq:min1}
\end{equation}
where the first summand is a goodness-of-fit index while the second a suitable penalty. 

Studies on statistics and machine learning have proposed numerous approaches for the sparsity-inducing  $\Omega$, including  Lasso \citep{tibshirani1996,efron2004least}.
 In our setting, however,  Lasso would yield sparsity by treating each parameter individually, regardless of its position in $\Psib$, which is against our desiderata.
The first extension of Lasso involving the concept of groups of coefficients is the group-Lasso  reported by \citet{yuan:2006:glasso}. This approach considers a partition of all the coefficients into a certain number of (disjoint) subsets and eventually allows some of these groups to shrink to zero.

 \begin{figure}
	\centering
	\includegraphics[width=0.75\textwidth,]{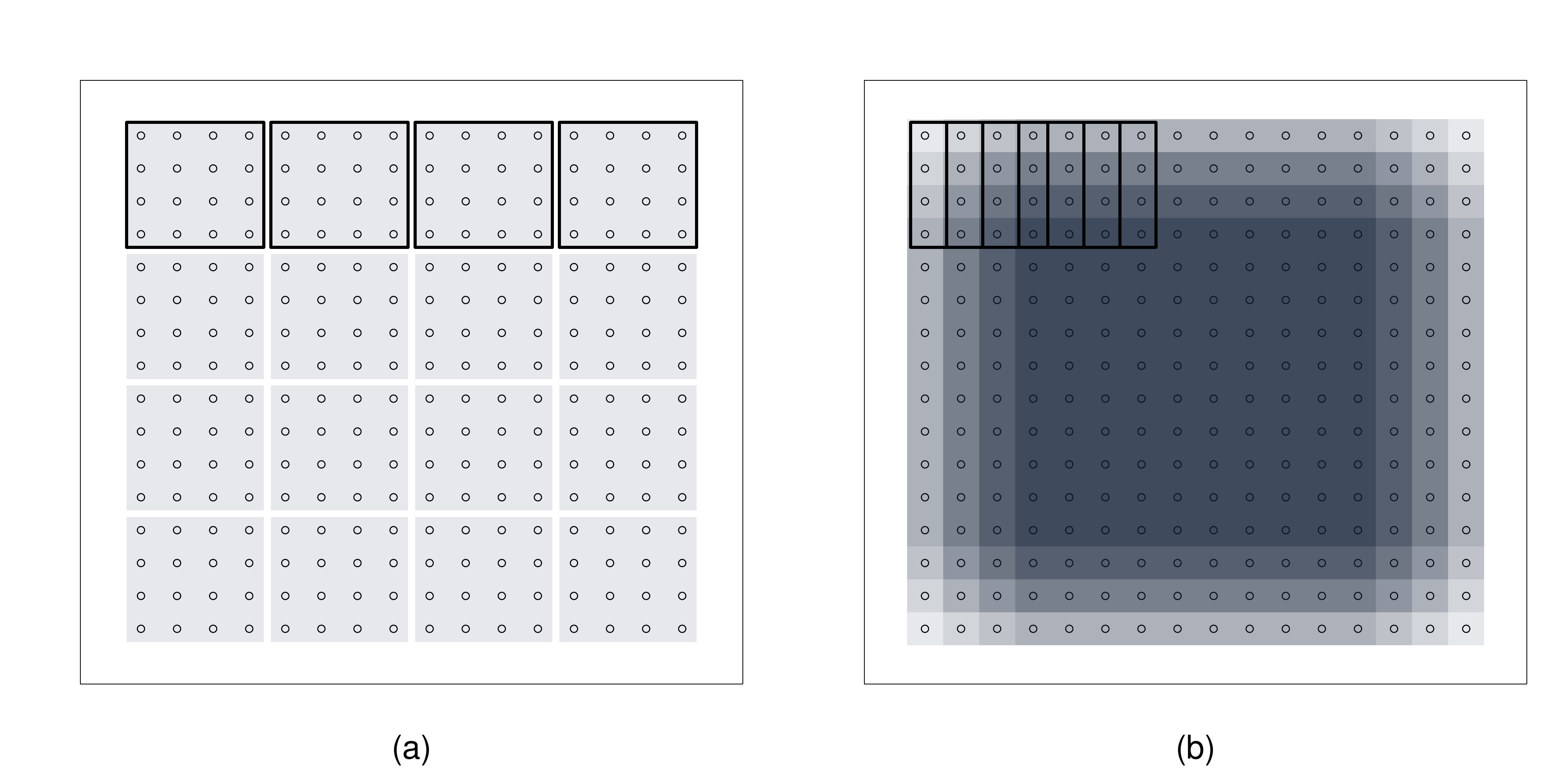}
	\caption{Possible coverings of the matrix $\Psi$ ($L=M=16$) with blocks of size $d\times d$ with $d=4$: disjoint covering (a) and overlap covering (b). Darker color denotes the superimposition of different blocks. The first four blocks are represented with thick black borders in each subfigure.}
	\label{fig:tassellation}
\end{figure}

On the surface, the group-Lasso approach is a promising solution to our problem, but we discuss this in more detail in what follows considering the graphical representation of a possible grouping, coherent with the group-Lasso definition, reported in panel (a) of Figure \ref{fig:tassellation}. Setting all coefficients belonging to one of the tiles of panel (a) to zero will force $\psi(t,s) = 0$ for $(t,s)$ in a specific set  $\F_{m,l}$. In fact, this kind of solution does not provide enough model flexibility in many respects. 
 First, the construction requires $L$ and $M$ to be multiple of $d$.      
 Second, it is impossible to induce $\psi(t,s) = 0$ for $(t,s) \in \F_{m,l}$ and $\psi(t',s') \neq 0$ for $(t',s')$ belonging to neighboring sets of $\F_{m,l}$ for any $m$ and $l$ not equal to $k_1 d+1$ and $k_2 d+1$, respectively, where  $k_1$ and $k_2$ are nonnegative integers. 
 Third, $\psi(t,s) = 0$ for $(t,s)$ belonging to the union of two contiguous rectangles $\F_{m,l} \cup \F_{m+1,l}$ if and only if two consecutive and disjoint blocks of coefficients are jointly zero, leading to $\psi(t,s) = 0$ for each $(t,s)$  belonging to the superset $\bigcup_{j=0}^{d+1} \F_{m+j,l}$. 
  Figure \ref{fig:esempi} in the Appendix shows two examples of these key limitations of the disjoint grouping of coefficients.

 To overcome these issues, instead of a disjoint partition, we define an overlapping sequence of blocks of size $d \times d$. Specifically, we introduce the block index $b = 1, \dots, B$ {with $B= (M-d+1) \times (L-d+1)$} denoting the total number of blocks. {Notably, there is a block for each set $\F_{m,l}$.} The generic $b$-th block contains the coefficients in the set
 \begin{eqnarray}
 \big\{ \psi_{ml}&:&  m=m^*, \dots,  m^*+d-1, \notag\\
  &&l=l^*, \dots, l^*+d-1, m^*= \ceil*{\frac{b}{L-d+1}} , 
 l^*= b \text{ mod } (L-d+1) \big\},
 \label{eq:blocks}
 \end{eqnarray}
  where $A \text{ mod } B$ represents the remainder of the division $A/B$. A graphical representation of this construction is shown in panel (b)  of Figure~\ref{fig:tassellation}.  This overlapping group structure allows $\F_0$ to be the union of any set $\F_{m,l}$ in \eqref{eq:nullregion} by moving a block of minimum size, depending on $d$, with steps of size one, and thus, allowing for greater flexibility in the definition of the subset $\F_0$, with respect to the disjoint grouping of panel (a). A precise characterization is formalized in Proposition 1,  in the next section. 

The above construction suggests specifying a penalty $\Omega$ for overlapping groups of coefficients, which has attracted significant interest in the last decade. For example, \citet{zhao2009composite} focused on overlapping and nested groups of coefficients motivated by modeling hierarchical relationships between  predictors. More general overlapping group-Lasso penalties have been proposed by \citet{jacob} and \citet{jenatton2011structured}, which define suitable norms for inducing a penalty that can model specific patterns for the \emph{support} of the vector of coefficients---the latter being the set of coefficients different from zero. 
%
The difference between these two approaches is that  \citet{jenatton2011structured} introduced a penalty inducing supports that arise as intersections of a subset of suitable groups, whereas \citet{jacob}  introduced a penalty that induces supports that are rather the  unions of a subset of the groups.
\citet{lim:hastie} exploited the construction of \citet{jacob} to learn the main and pairwise interaction terms of categorical covariates in linear and logistic regression models, imposing through the overlapping group structure hierarchical regularization with a similar motivation as \citet{zhao2009composite} but also proposing an efficient computational strategy.

In classical regression, the interest lies in the support of the vector of regression coefficients. Consistent with our motivations, instead, we focus  on the sparsity structure of the matrix of coefficients $\Psib$  rather than its support.
Hence, we specify \eqref{eq:min1} as
\begin{equation}	
\frac{1}{2} \sum_{i=1}^n \int \left( y_i(s) - \int x_i(t) \psi(t,s) dt \right)^2 ds + \lambda \sum_{b=1}^{B+1}  || c_{b} \odot \boldsymbol{\psi} ||_2 ,
\label{eq:min2}
\end{equation}
where $\lambda>0$ is a fixed penalization term, and $\Omega$ in \eqref{eq:min1} specifies in the sum of $B+1$ Euclidean norms  $||c_{b}\odot\boldsymbol{\psi}||_2$, where $\boldsymbol{\psi} = \vec (\Psib)$, and  $\odot$ represents the Hadamard product. The index $b$ denotes the block of coefficients in $\Psib$, with the first $B$ blocks being consistent with \eqref{eq:blocks} and the last block containing all coefficients in $\Psib$. Vectors of size $ML$, denoted by $c_b$, contain known constants that equally balance the penalization of the coefficients in $\bm{\Psi}$. This \emph{balancing} is needed to account for the fact that the parameters close to the boundaries of the matrix $\Psib$ appear in fewer  groups than central ones,  as shown by the color scaling of panel (b) of Figure \ref{fig:tassellation}. Specifically, the generic vector $c_b$ is defined as 
$
c_b = \vec(  \mathbf{S}_b \odot \mathbf{C}),
$ 
where $  \mathbf{S}_b$ is the $M \times L$ selection matrix with general entry $s_{ml}^{(b)}$, defined as 1 if the parameter $\psi_{ml}$ belongs to group $b$ and 0 otherwise and $\mathbf{C}$ is the matrix with general element $c_{ml}$ defined as 
$
c_{ml} = \left( \sum_{b=1}^{B+1}  s_{ml}^{(b)} \right)^{-1}.
$
Note that this  penalty constitutes a special case of the norm defined by \citet{jenatton2011structured}. 


%
\section{Computation}
\label{sec:comp}
%
Before describing an efficient computational strategy for our LSFR model, we introduce the empirical counterparts of the quantities described in the previous section assuming to observe a sample of  response curves $y_i$ with $i=1, \dots,n$ on a common grid of $G$ points, i.e. $y_i = (y_i(s_1), \dots, y_i(s_G))^T$. Let also  $x_i$ be  the related functional covariate  observed on a possibly different but--- common across $i$---grid of points, that for simplicity and without loss of generality, we assume of length $G$. 
Let $\mathbf{X}$ be the $n \times G$ matrix with  $x_i$ in the rows. Let $\boldsymbol{\Phi}$ and $\boldsymbol{\Theta}$ be the $M \times G$  and $L \times G$ matrices defined as
\[
\boldsymbol{\Phi}  = \begin{pmatrix}
\varphi_1(t_1) &  \cdots & \varphi_1(t_G) \\
\vdots&&\vdots \\
\varphi_m(t_1) &  \cdots & \varphi_m(t_G) \\
\vdots&&\vdots \\
\varphi_M(t_1) &  \cdots & \varphi_M(t_G) \\
\end{pmatrix}, \quad \quad
\boldsymbol{\Theta} = \begin{pmatrix}
\theta_1(s_1) &  \dots & \theta_1(s_G) \\
\vdots&&\vdots \\
\theta_l(s_1) &  \cdots & \theta_l(s_G) \\
\vdots&&\vdots \\
\theta_L(s_1) &  \cdots & \theta_L(s_G) \\
\end{pmatrix}.
\]
Let $\mathbf{Y}$  and ${\bf E}$  be the $n\times G$ matrices obtained as $\mathbf{Y} = (y_1, \dots, y_n)^T$ and  $\mathbf{E} = (e_1, \dots, e_n)^T$ , with  $e_i = (e_i(s_1), \dots, e_i(s_G))^T$. Model \eqref{eq:modelnointercept} can be equivalently written in matrix form as
\[
\mathbf{Y} = \mathbf{X} \boldsymbol{\Phi}^T \Psib \boldsymbol{\Theta}   + {\bf E}.
\]
Applying the vectorization operator on each side of the equality above,  we have
\begin{align*}
	\mathbf{y}  &= \vec(\mathbf{X} \boldsymbol{\Phi}^T \Psib \boldsymbol{\Theta}) + \vec({\bf E})
	= ( \boldsymbol{\Theta}^T  \kronecker \mathbf{X} \boldsymbol{\Phi}^T) \vec(\Psib)+ \vec({\bf E})
	 = \mathbf{Z}\boldsymbol{\psi} + {\bf e},
\end{align*}
where $\mathbf{y}=\vec(\mathbf{Y})$, $\boldsymbol{\psi}=\vec(\Psib)$ is the vector of coefficients of dimension $LM$,
$\bf{e}=\vec(\bf {E})$,
and  $\mathbf{Z}= \boldsymbol{\Theta}^T  \kronecker \mathbf{X} \boldsymbol{\Phi}^T $ is the design matrix of dimension $nG \times LM$. Therefore, for a given tuning parameter $\lambda>0$, the optimization problem becomes  
%
%
\begin{equation}
\begin{aligned}
\widehat{\boldsymbol{\psi}}_\lambda&=\arg \min_{\boldsymbol{\psi}} \ell(\boldsymbol{\psi}), \quad 
\ell(\boldsymbol{\psi})&=\frac{1}{2}\Vert \mathbf{y} - \mathbf{Z} \boldsymbol{\psi}\Vert_2^2 + \lambda \sum_{b=1}^{B+1} \Vert \mathbf{D}_{b}\boldsymbol{\psi} \Vert_2,
\end{aligned}
\label{eq:min3_init}
\end{equation}
where $\mathbf{D}_b=\mathrm{diag}(c_b)$ is a diagonal matrix whose elements correspond to the elements of the vector $c_b$ defined in the previous section.
The following result is about the uniqueness of the solution of the optimization problem in equation \eqref{eq:min3_init} thus leading to the uniqueness of the estimator $\widehat \psi_\lambda$. Its proof is reported in the Appendix.
\begin{theorem}
\label{th:existence}
Under the representation  \eqref{eq:tensorproduct}, for any $\lambda >0$, the minimization of \eqref{eq:min3_init} with respect to the coefficients $\psi_{ml}$ for $l=1, \dots, L$ and $m=1, \dots, M$ leads to a unique solution $\widehat \psi_\lambda$.
\end{theorem}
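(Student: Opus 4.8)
The plan is to exploit convexity and argue by contradiction, isolating the special role of the full block $b=B+1$. Since $\ell=f+\lambda P$ with $f(\boldsymbol{\psi})=\tfrac12\|\mathbf{y}-\mathbf{Z}\boldsymbol{\psi}\|_2^2$ convex and $P(\boldsymbol{\psi})=\sum_{b=1}^{B+1}\|\mathbf{D}_b\boldsymbol{\psi}\|_2$ convex as a nonnegative combination of Euclidean (semi)norms, the objective $\ell$ is convex and its minimizer set is convex; it therefore suffices to exclude two distinct minimizers. Suppose $\boldsymbol{\psi}_1\neq\boldsymbol{\psi}_2$ both attain the minimum and set $\delta=\boldsymbol{\psi}_2-\boldsymbol{\psi}_1\neq\mathbf{0}$. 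By convexity the whole segment $\boldsymbol{\psi}_1+t\delta$, $t\in[0,1]$, is minimizing, so $\ell$ is constant, hence affine, along it.

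First I would record an elementary fact: if a finite sum of convex functions is affine on a segment, then each summand is affine there, since writing one summand as the affine total minus the remaining convex summands exhibits it as simultaneously convex and concave. Applying this to $\ell=f+\lambda P$ and then to the $B+1$ terms of $P$ shows that $f$ and every $t\mapsto\|\mathbf{D}_b(\boldsymbol{\psi}_1+t\delta)\|_2$ are affine in $t$ on $[0,1]$. The restriction of $f$ to the line is a parabola with leading coefficient $\tfrac12\|\mathbf{Z}\delta\|_2^2$, so its affineness forces $\mathbf{Z}\delta=\mathbf{0}$.

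The crux is the last block. Because every coefficient belongs to the full block, the weights $c_{B+1}=\vec(\mathbf{C})$ are strictly positive, so $\mathbf{D}_{B+1}=\mathrm{diag}(c_{B+1})$ is invertible and $\|\mathbf{D}_{B+1}\cdot\|_2$ is a genuine weighted Euclidean norm. Writing $a=\mathbf{D}_{B+1}\boldsymbol{\psi}_1$ and $v=\mathbf{D}_{B+1}\delta\neq\mathbf{0}$, the second derivative of $t\mapsto\|a+tv\|_2$ is proportional to $\|a\|_2^2\|v\|_2^2-\langle a,v\rangle^2$, which is nonnegative by Cauchy--Schwarz and vanishes only when $a$ and $v$ are collinear; hence affineness of this term forces $a\parallel v$, and invertibility of $\mathbf{D}_{B+1}$ yields $\boldsymbol{\psi}_1=\mu\delta$ for some scalar $\mu$. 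Thus both minimizers lie on the line $\mathbb{R}\delta$ through the origin.

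On that line the contradiction is immediate: since $\mathbf{Z}\delta=\mathbf{0}$, the quadratic term equals $\tfrac12\|\mathbf{y}\|_2^2$ along all of $\mathbb{R}\delta$, so $\ell(t\delta)=\tfrac12\|\mathbf{y}\|_2^2+\lambda\,|t|\sum_{b=1}^{B+1}\|\mathbf{D}_b\delta\|_2$ with $\sum_{b}\|\mathbf{D}_b\delta\|_2\geq\|\mathbf{D}_{B+1}\delta\|_2>0$; this one-dimensional map is uniquely minimized at $t=0$, so $\mathbb{R}\delta$ carries at most one minimizer, contradicting the presence of $\boldsymbol{\psi}_1\neq\boldsymbol{\psi}_2$ on it. I expect the main obstacle to be conceptual rather than computational: the design matrix $\mathbf{Z}=\boldsymbol{\Theta}^T\otimes\mathbf{X}\boldsymbol{\Phi}^T$ need not have full column rank---its rank is the product of the factor ranks, which may fall short of $LM$ when $n<M$---so the quadratic is not strictly convex, while the penalty, being a sum of Euclidean norms, is not strictly convex along rays through the origin. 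Uniqueness must be extracted from their interplay, and the delicate point is recognizing that the full-support block $B+1$ supplies precisely the strict convexity transverse to such rays, while the data term eliminates the remaining direction.
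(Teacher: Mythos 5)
Your proof is correct, and it takes a genuinely different route from the paper's. The paper disposes of uniqueness of the coefficient vector in one line by invoking Proposition~1 of \citet{jenatton2011structured}, the only structural observation being the same one you isolate as the crux: the last group in the sum contains every coefficient with strictly positive weights, so the penalty is a genuine norm; it then adds that the tensor-product B-spline basis elements are linearly independent, so the unique coefficients determine a unique surface $\widehat\psi_\lambda$ (for the direction actually needed --- from coefficients to function --- this is automatic, as in your argument). Your self-contained convexity argument buys transparency that the citation does not: it makes explicit why the data term alone cannot give uniqueness (the design $\mathbf{Z}=\boldsymbol{\Theta}^T\otimes\mathbf{X}\boldsymbol{\Phi}^T$ can be column-rank deficient), why the overlapping blocks $b\le B$ contribute nothing to the argument, and exactly where the full block enters --- strict convexity of $t\mapsto\Vert a+tv\Vert_2$ off the collinear case to pin both minimizers to a ray through the origin, and the kink of $\vert t\vert$ at zero to kill that last direction. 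Two small additions would make your version complete as a stand-alone replacement: a sentence on existence of a minimizer, which the statement presupposes and which follows from coercivity, $\ell(\boldsymbol{\psi})\ge\lambda\min_j (c_{B+1})_j\,\Vert\boldsymbol{\psi}\Vert_2\to\infty$, again courtesy of the full block; and the one-line passage from the unique $\widehat{\boldsymbol{\psi}}_\lambda$ to the unique kernel $\widehat\psi_\lambda(t,s)$ through the representation \eqref{eq:tensorproduct}.
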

In practice, however,  the non-separability of the penalty function when groups overlap, makes the  optimization problem in equation \eqref{eq:min3_init} not straightforward. { The non-separability of the overlap group-Lasso penalty function prevents the application of standard coordinate descent algorithms that cycle through the parameters and updates them either individually (as for the Lasso) or by groups (as for the group-Lasso), \citep[see, e.g.][]{wu_etal.2008,bach_etal.2012,huang_etal.2012,yang_zou.2015}}.
%
We propose to map the optimization in equation \eqref{eq:min3_init} to an optimization of a fully convex and differentiable function by leveraging the  Majorization-Minimization (MM, hereafter) principle firstly introduced by \cite{ortega_rheinboldt.1970} and developed by \cite{hunter_lange.2004, lange.2010,  lange.2016}. The MM approach is a general  prescription for constructing optimization algorithms that operates by creating a surrogate function that minorizes (or majorizes) the objective function. The surrogate function is then maximized (or minimized) in place of the original function. 
Various existing approaches in statistics and machine learning can be interpreted from the majorization-minorization point of view, including the EM algorithm \citep[][]{neal_hinton.1999,wu_lange.2010} or boosting and some variants of the variational Bayes methods \citep[][]{wainwright_jordan.2008}. %
Here the MM approach is employed for the purpose of delivering a quadratic function that majorizes the convex objective function in  \eqref{eq:min3_init}. 
The MM algorithm has been introduced within the context of $\ell_1$ and $\ell_2$ penalized regressions by \cite{wu_etal.2008}. The authors consider both the Lasso and group-Lasso penalty and provide cyclic coordinate descent algorithms for both problems. Coordinate descent algorithms are simple, fast and stable and they usually do not require the inversion of large matrices \citep{hunter_lange.2004}. However, for large dimensional models with high level of sparsity and separable penalty, the coordinate-wise gradient provide all the relevant information to update parameters and most of the parameters are never updated from their starting value of zero. Nevertheless, when the level of sparsity is either unknown or moderately low, coordinate-wise updates no longer represent the best strategy. This is exactly the framework here considered. Therefore, leveraging the supporting hyperplane paradigm \citep[see][]{lange.2016} and the Sherman–Morrison–Woodbury identity we deliver an efficient MM algorithm for our overlap group-Lasso penalty that jointly updates the non-zero regression coefficients at any iteration. Observe also that, because of the analytical form of the overlap penalty  function \citep[see][]{huang_etal.2012}, efficient block-wise updates as in \cite{qin_etal.2013} cannot be considered here. {Our MM algorithm instead serves the purpose of delivering an efficient solution to the otpimization problem in equation \eqref{eq:min3_init} without imposing any specific group conformation and finds application even in more structured group-type penalties, \citep[e.g.][]{bach_etal.2012,jenatton2011structured}}.\par
%
%
Any MM algorithm  iterates between two steps: $(i)$ at   iteration $k$, a majorizer function $\mathcal{Q} (\boldsymbol{\psi}\vert\widehat{\boldsymbol{\psi}}^{(k-1)})$ is obtained conditioning on $\widehat{\boldsymbol{\psi}}^{(k-1)}$; $(ii)$ the value for $\widehat{\boldsymbol{\psi}}^{(k)}$ is obtained minimizing  $\mathcal{Q} (\boldsymbol{\psi}\vert\widehat{\boldsymbol{\psi}}^{(k-1)})$. As a further benefit of exploiting the MM majorization principle, the MM iterations $\widehat{\boldsymbol{\psi}}^{(k)}$, for $k=1,2,\dots$ possesses the decent properties driving the target function downhill.
Several paradigms can be exploited to derive a valid majorizer of the overlap group penalty term in  \eqref{eq:min3_init}, \citep[see, e.g.][for an exhaustive discussion]{lange.2016}. Here, we leverage the dominating hyperplane principle that introduces an upper bound for the strictly concave function $\sqrt{x}$, i.e. $\sqrt{x}\leq \sqrt{x^{(k)}}+(2\sqrt{x^{(k)}})^{-1}\big(x-x^{(k)}\big)$ for any $x,x^{(k)}\in\mathbb{R}^+$. This manoeuvres separate parameters and reduce the surrogate to a sum of linear terms and squared Euclidean norms. 
Consistently with this, we introduce the following surrogate function that majorizes \eqref{eq:min3_init} at $\widehat{\boldsymbol{\psi}}^{(k)}$
\begin{equation}
\mathcal{Q}(\boldsymbol{\psi}\vert\widehat{\boldsymbol{\psi}}^{(k)})=\frac{1}{2} \Vert \mathbf{y}  - \mathbf{Z} \boldsymbol{\psi}\Vert_2^2 + \lambda  \sum_{b=1}^{B+1} \biggl( \Vert \mathbf{D}_b\widehat{\boldsymbol{\psi}}^{(k)}\Vert_{2}+\frac{\Vert \mathbf{D}_b\boldsymbol{\psi}\Vert^2_{2} - \Vert \mathbf{D}_b\widehat{\boldsymbol{\psi}}^{(k)}\Vert_{2}^2}{2\Vert \mathbf{D}_b\widehat{\boldsymbol{\psi}}^{(k)}\Vert_{2}}\biggl),
\label{eq15}
\end{equation}
where $\widehat{\boldsymbol{\psi}}^{(k)}$ is the value of the parameter $\boldsymbol{\psi}$ at the $k$-th iteration  of the MM algorithm. The important consequence of this result is that solution of problem \eqref{eq:min3_init} can be obtained through the iterative minimization of (\ref{eq15}), where (\ref{eq15}) is convex and differentiable. A compact form for the minimization problem of the surrogate function of \eqref{eq15} is 
\begin{align}
\label{eq:minimization_surrogate_fun_1}
\widehat{\boldsymbol{\psi}}^{(k+1)}_\lambda &= \arg \min_{\boldsymbol{\psi}}\mathcal{Q}(\boldsymbol{\psi}\vert\widehat{\boldsymbol{\psi}}^{(k)})\\
\label{eq:minimization_surrogate_fun_2}
\mathcal{Q}(\boldsymbol{\psi}\vert\widehat{\boldsymbol{\psi}}^{k})&=\frac{1}{2} \Vert \mathbf{y}  - \mathbf{Z}\boldsymbol{\psi}\Vert_2^2 + \widehat{d}_0^{(k)} + \lambda  \sum_{b=1}^{B+1} \widehat{d}_b^{(k)} \Vert  \mathbf{D}_b\boldsymbol{\psi} \Vert^2_{2},
\end{align}
where $\widehat{d}_0^{(k)}$ and $\widehat{d}_b^{(k)}$ are constants that depend on the $k$-th iteration:
\begin{equation}
\label{eq:mm_algo_update_constants}
\widehat{d}_0^{(k)} = \sum_{b=1}^{B+1} \biggl( \Vert \mathbf{D}_b \widehat{\boldsymbol{\psi}}^{(k)}\Vert_{2} - \frac{\Vert  \mathbf{D}_b\widehat{\boldsymbol{\psi}}^{(k)}\Vert^2_{2}}{2\Vert  \mathbf{D}_b\widehat{\boldsymbol{\psi}}^{(k)}\Vert_{2}} \biggl), \quad\mbox{and} \quad \widehat{d}_b^{(k)} = \frac{1}{2\Vert \mathbf{D}_b\widehat{\boldsymbol{\psi}}^{(k)}\Vert_{2}}.
\end{equation}
The explicit solution of the minimization problem in equations \eqref{eq:minimization_surrogate_fun_1}-\eqref{eq:minimization_surrogate_fun_2} at the $k$-th iteration is
\begin{equation}
\label{eq:minimization_surrogate_fun_sol}
\widehat{\boldsymbol{\psi}}^{(k+1)} = (\mathbf{Z}^{T}\mathbf{Z} + \lambda \mathbf{H}^{(k)} )^{-1} \mathbf{Z}^{T} \mathbf{y}, 
\end{equation}
where $\mathbf{H}^{(k)}=\big(\sum_{b=1}^{B+1} \widehat{d}_b^{(k)}\mathbf{D}_b^T\mathbf{D}_b\big)^{1/2}$. The MM algorithm is described in Algorithm \ref{alg:coord_desc_mm_qr}, reported in the Appendix. An important consideration in using a particular algorithm is the amount of work the computer is required to carry out in running it which is measured in terms of the number of floating-point operations (FLOPS) that are needed. The computational cost of Algorithm \ref{alg:coord_desc_mm_qr} is provided in the Appendix, at Proposition \ref{prop:cost}. 
%

%
%
\subsection{Efficient MM for the generalised ridge inversion}
%
 The MM parameters update in equation \eqref{eq:minimization_surrogate_fun_sol} suffers from two major drawbacks. First, it requires the inversion of a potentially large-dimensional ridge-type design matrix and, second, it does not exclude the pathological case where one or more of the denominators of the weights in  \eqref{eq:mm_algo_update_constants} are exactly zero. The usual solution for the latter problem consists to perturb $\widehat{d}_b^{(k)}$ by adding a small $\epsilon>0$ to the $\ell_2$-norm, \citep[see][]{hunter_lange.2000}. In what follows, instead, we rely on a different solution. 
Specifically, let 
$\mathbf{H}^{(k)}$ be a $LM\times LM$  diagonal matrix and define the $LM\times LM$ symmetric and positive definite matrix $\mathbf{A}_\lambda^{(k)}= \mathbf{Z} ^{{T}} \mathbf{Z} +\lambda\mathbf{H}^{(k)}\in\mathbb{S}^{LM}_{++}$. The computation of the MM update in equation \eqref{eq:minimization_surrogate_fun_sol} requires the inversion of the $LM\times LM$ symmetric full matrix $\mathbf{A}_\lambda^{(k)}$ at any iteration which takes on the order of $\mathcal{O}((LM)^3)$ arithmetic operations \citep[][]{golub_van_loan.2013}. Since the matrix $\mathbf{H}^{(k)}$ changes at any iteration performing the QR decomposition of $\mathbf{A}_\lambda^{(k)}$ becomes prohibitive even for moderately large values of $LM $.\par
%
Given the diagonal structure of the matrix $\mathbf{H}^{(k)}$, using the Sherman-Morrison-Woodbury matrix identity reduces the problem of inverting $\mathbf{A}_\lambda^{(k)}$ for a fixed $\lambda$ to the simpler problem of computing 
\begin{equation}
\big( \mathbf{Z} ^{T} \mathbf{Z} +\lambda\mathbf{H}^{(k)}\big)^{-1}=\frac{1}{\lambda}\big(\mathbf{H}^{(k)}\big)^{-1}-\frac{1}{\lambda^2}\big(\mathbf{H}^{(k)}\big)^{-1} \mathbf{Z} ^{T} \mathbf{B}^{(k)}_{\lambda} \mathbf{Z} \big(\mathbf{H}^{(k)}\big)^{-1},
\end{equation}
where the full matrix $\mathbf{B}^{(k)}_{\lambda}=\lambda\big(\lambda \mathbf{I}_{nG}+\mathbf{J}^{(k)}\big)^{-1}\in\mathbb{S}_{++}^{nG}$ with $\mathbf{J}^{(k)}= \mathbf{Z} \big(\mathbf{H}^{(k)}\big)^{-1} \mathbf{Z} ^{T}\in\mathbb{S}_{++}^{nG}$ 
can be computed without loss of generality  through the spectral decomposition of $\mathbf{J}^{(k)}$. Specifically, let $ \mathbf{U}^{(k)} \boldsymbol{\Lambda}^{(k)} (\mathbf{U}^{(k)})^{T}=\mathbf{J}^{(k)}$ be 
such a decomposition, then 
$\mathbf{B}^{(k)}_{\lambda}=\lambda \mathbf{U}^{(k)}\big(\lambda \mathbf{I}_{nG}+\boldsymbol{\Lambda}^{(k)}\big)^{-1}(\mathbf{U}^{(k)})^{T}$.
Therefore, the MM  update only requires the spectral decomposition of the symmetric matrix $\mathbf{J}^{(k)}$ to be computed at any iteration. However, as pointed by the following remark, as a byproduct of our procedure, we obtain the indirect solution to the problem of zeros in the surrogate penalty function.
\begin{remark}
As iterations proceeds, it may happen that some of the weights $\Vert \mathbf{D}_b\widehat{\boldsymbol{\psi}}^{(k)}\Vert_2$  associated to a sequence of zeros on the vector $\widehat{\boldsymbol{\psi}}^{(k)}$ larger than $d^2$ becomes closer and closer to zero. Leveraging the Sherman-Morrison-Woodbury matrix identity prevents the weights to explode.
\end{remark}
We can exploit the fact that $\widehat{\boldsymbol{\psi}}^{(k)}$ in equation \eqref{eq:minimization_surrogate_fun_sol} at some iteration $k\geq k_0$ may become zero, i.e. $\Vert \mathbf{D}_b \widehat{\boldsymbol{\psi}}^{(k)}\Vert_2=0$, to provide a fast and efficient solution to the problem of finding the spectral decomposition of $\mathbf{J}^{(k)}$ 
for $nG$ moderately large. Let $ \mathbf{Z}_0\in\mathbb{R}^{nG\times p_0}$ and $ \mathbf{Z}_1\in\mathbb{R}^{nG\times (LM-p_0)}$ with $1< p_0< LM$ be a disjoint partition of the space spanned by $ \mathbf{Z} $ such that $ \mathbf{Z} =(\begin{matrix} \mathbf{Z}_0& \mathbf{Z} _1\end{matrix})$, and let $\mathbf{H}^{(k)}_0\in\mathbb{R}^{p_0\times p_0}$ and $\mathbf{H}^{(k)}_1\in\mathbb{R}^{(LM-p_0)\times (LM-p_0)}$ be the corresponding partition of the diagonal matrix $\mathbf{H}^{(k)}$, then 
$$ 
\mathbf{Z} \big(\mathbf{H}^{(k)}\big)^{-1} \mathbf{Z} ^{T}= \mathbf{Z}_0\big(\mathbf{H}_0^{(k)}\big)^{-1} \mathbf{Z}_0^{T}+ \mathbf{Z}_1\big(\mathbf{H}_1^{(k)}\big)^{-1} \mathbf{Z}^{T}_1.
$$ 
Now, assume that at the $k$-th iteration, $(\mathbf{H}_0^{(k)})^{-1}=\mathrm{diag}\{0,\dots,0\}$, then $ \mathbf{Z} (\mathbf{H}^{(k)})^{-1} \mathbf{Z} ^{T}= \mathbf{Z}_1(\mathbf{H}_1^{(k)})^{-1} \mathbf{Z} ^{T}_1$, which only requires the spectral decomposition of the matrix $\mathbf{Z}_1(\mathbf{H}_1^{(k)})^{-1} \mathbf{Z} ^{T}_1$.
Additional computational considerations and results are reported in the Appendix.

\section{Theoretical properties}
\label{sec:properties}

In this section, we provide appealing theoretical properties for our LSFR model and for the related estimator $\widehat \psi_\lambda(t,s)$  of $\psi(t,s)$ arising from \eqref{eq:tensorproduct} and \eqref{eq:min3_init}. All the proofs are reported in the Appendix. We first describe the sparsity patterns that we can induce through our LSFR construction in the following proposition.
\begin{proposition}
	\label{prop:F0}
	Let $F = \{\F_{m,l}; \, m=1, \dots, M-d+1; \, l =1, \dots, L-d+1\}$  be the collection of all the sets described in \eqref{eq:nullregion}, ${\cal{P}}(F)$ the power set of $F$, and ${\cal F}_0(\psi)$ be the subset of $\F$ in which $\psi(t,s) = 0$.	Then $\widehat{\psi}_\lambda \in S$ with $S  \subseteq L_2$ and
	$
	S = \{ \psi \in L_2:  \F_0(\psi)  \in {\cal P}(F)\}.
	$
\end{proposition}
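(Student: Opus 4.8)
The plan is to reduce the statement to a purely deterministic, basis-level fact about tensor-product B-splines and then apply it to the minimizer. By construction, $\widehat{\psi}_\lambda$ is built from the representation \eqref{eq:tensorproduct} with B-spline factors, so $\widehat{\psi}_\lambda(t,s) = \bm{\varphi}(t)^T \widehat{\bPsi}_\lambda \bm{\theta}(s)$ lies in the tensor-product spline space spanned by $\{\varphi_m(t)\theta_l(s)\}$. It therefore suffices to prove that \emph{every} $\psi$ of this form satisfies $\F_0(\psi) \in {\cal P}(F)$; the conclusion for $\widehat{\psi}_\lambda$ then follows immediately. First I would fix an arbitrary cell $\F_{m,l} = (\tau_m,\tau_{m+1}) \times (\sigma_l,\sigma_{l+1})$ and examine the restriction of $\psi$ to it.

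The core of the argument is a cell-wise rigidity property. On the open knot span $(\tau_m,\tau_{m+1})$ only the $d$ B-splines indexed $m, \dots, m+d-1$ are nonzero, by the local-support property, and their restrictions to that span are linearly independent; in fact they form a basis of the polynomials of degree at most $d-1$ on the span, a standard property of B-splines \citep{deboor}. The analogous statement holds for the factors $\theta_l, \dots, \theta_{l+d-1}$ on $(\sigma_l, \sigma_{l+1})$. Consequently, on $\F_{m,l}$ the function $\psi$ reduces to $\sum_{m'=m}^{m+d-1}\sum_{l'=l}^{l+d-1} \psi_{m'l'}\,\varphi_{m'}(t)\theta_{l'}(s)$, a bivariate polynomial, and the $d^2$ tensor products appearing here are linearly independent as functions on the cell. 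Two facts then follow: (i) $\psi \equiv 0$ on $\F_{m,l}$ if and only if the entire $d\times d$ block of coefficients $\{\psi_{m'l'}\}$ indexed by the cell vanishes; and (ii) since a nonzero polynomial cannot vanish on a set of positive Lebesgue measure, whenever $\psi$ vanishes on some positive-measure subset of $\F_{m,l}$ it must vanish on the whole cell.

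Next I would assemble $\F_0(\psi)$ from these local conclusions. Take any interior point $(t_0,s_0)$ of a cell $\F_{m,l}$. For $\epsilon$ small enough the ball $B_\epsilon((t_0,s_0))$ is contained in $\F_{m,l}$, so the defining condition of $\F_0$---the existence, for every $\epsilon>0$, of a positive-measure set $\bar N \subseteq B_\epsilon((t_0,s_0))$ on which $\psi$ vanishes---forces $\psi$ to vanish on a positive-measure subset of the cell and hence, by (ii), on all of $\F_{m,l}$. Conversely, every interior point of a cell on which $\psi$ vanishes identically clearly belongs to $\F_0(\psi)$. Thus, up to the knot lines (a Lebesgue-null set), $\F_0(\psi)$ coincides with $\bigcup\{\F_{m,l} : \psi|_{\F_{m,l}} \equiv 0\}$, which is exactly the union associated with an element of the power set ${\cal P}(F)$. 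This gives $\psi \in S$, and specializing to $\widehat{\psi}_\lambda$ completes the proof.

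The step I expect to demand the most care is the cell-wise linear independence of the tensor-product B-splines---equivalently, that $d$ consecutive order-$d$ B-splines restricted to a single knot span span the degree-$(d-1)$ polynomials---since this underpins equivalence (i) and licenses identifying zero regions with unions of \emph{full} cells rather than arbitrary subsets; I would invoke the Curry--Schoenberg and local-support results of standard B-spline theory rather than re-derive them. A secondary, purely technical point is the treatment of the knot lines: a point lying on a boundary shared by a zero-cell and a non-zero cell can still satisfy the defining condition for $\F_0$, so the set identity above should be read modulo this measure-zero set (or with the $\F_{m,l}$ taken in closed form); either convention keeps $\F_0(\psi)$ inside ${\cal P}(F)$.
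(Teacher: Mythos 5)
Your proof is correct, and it follows the same underlying idea the paper uses: the local support of order-$d$ B-splines makes $\psi$ restricted to each open knot cell $\F_{m,l}$ a polynomial determined by a $d\times d$ coefficient block, so zero sets propagate from positive-measure subsets to whole cells. The paper in fact never writes out a standalone proof of this proposition --- its appendix omits it, and the justification is the informal discussion in Section~2.1, which only states the sufficiency direction (block of coefficients zero $\Rightarrow$ $\psi$ zero on the cell); your argument supplies the missing pieces, namely the cell-wise linear independence giving the converse, the polynomial-rigidity step that connects the measure-theoretic definition of $\F_0$ to full cells, and the observation that the identification holds only up to the Lebesgue-null knot lines (a point the paper's notation, which reads $\F_0(\psi)\in\mathcal{P}(F)$ as ``$\F_0(\psi)$ is a union of cells,'' silently glosses over). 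So your write-up is a strictly more careful version of the intended argument rather than a different route.
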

The following two results show that the introduced model structure is sufficiently flexible  to cover the two extreme situations represented by the concurrent and nonconcurrent models. 
\begin{remark}
	\label{rem:concurrentseidentro}
	Assume $L = M$. Let $\F_{0}^\text{concurrent} = \{(t,s) \in \F: t\neq s\}$ be the subset of $\F$ where $\psi(t,s)=0$ in the concurrent model. Let $\bar{ \F}_0 \in \mathcal{P}(F)$ the superset of $\F_0^\text{concurrent}$ defined as 
	\[
	{\bar \F}_0 = \arg\,\min_{\F_0 \in \mathcal{P}(F)} \mu (\F_0 \backslash   \F_0^\text{concurrent}). 
	\]
	Then $\mu( \bar \F_0 \backslash \F_0^\text{concurrent}) = O(1/M)$. 
\end{remark}
\begin{remark}
\label{rem:fullseidentro}
	If $\psi^*(t,s) \neq0$ for each $(t,s) \in \cal F$ then $\psi^* \in S$.
\end{remark}
Remark \ref{rem:concurrentseidentro} implies that for a sufficiently high number of knots in the tensor product spline expansion, the sparsity structure of a concurrent Hilbert-Schmidt operator can be well approximated by the proposed formulation. A similar property holds also for the historical model \citep{malfait:2003:historical}. At the same time, Remark \ref{rem:fullseidentro} trivially states that  nonconcurrent model is included in our class, implying that  relatively small values of the penalty parameter $\lambda$ lead to no sparsity of $\psi$.

The following two results determine interesting consistency properties, namely, that the method correctly identifies the region where the true kernel is null and vice versa. Specifically  Theorem~\ref{th:consistency1} assumes that the true kernel lies in the vector space generated by the tensor product $V_{\theta \otimes \varphi}$ and Theorem~~\ref{th:consistency2}  relaxes this assumption.  Both theorems hold under the correct model specification and when $e_i(s)$ are uncorrelated Gaussian errors with variance $\sigma^2>0$. 
Both results are adaptations of the consistency results of \citet{jenatton2011structured}.

Before stating the results, we define the following quantities based on the assumption that there exists a true kernel $\psi_T$. In Theorem \ref{th:consistency1} we assume that $\psi_T$ belongs   to the vector space $V_{\varphi \otimes \theta}$ and thus that there exist a unique  matrix of coefficients $\Psib_T$ representing it.  Furthermore, let ${\cal B }\subset \{1, \dots, B, B+1\}$ be the set of indices of the true non-zero groups and ${\cal I} \subset \{(m,l): m=1, \dots, M; \, l = 1, \dots, L\}$ the set of the indices of the non-zero coefficients $\psi_{m,l}$. Let ${\mathbf{Z}}_1$ the matrix containing the columns of ${\mathbf{Z}}$ associated to the true non zero coefficients and ${\mathbf{Z}}_0$ its complement. Define the norm
\[
\Omega^c_{\cal B}( u ) = \sum_{b \in \cal B}  || c_{b}^{({\cal B}^C)} \odot u ||_2 ,
\]
which is the modification of the penalising norm of \eqref{eq:min2} considering the sum over the inactive groups. Let further $\left(\Omega^c_{\cal B} \right)^*$ be its dual norm. Finally let $\mathbf{r}_1$ be the vector containing, for $(m,l) \in \cal I$, the elements 
\[
\psi_{m,l} c^2_{m,l} \sum_{b  \in \cal B} || c_b \odot \bpsi ||_2^{-1}.
\]
%
\begin{theorem}
	\label{th:consistency1}
Let $\psi_T$ be the true HS operator with  $\psi_T \in V_{\varphi \otimes \theta}$ where  $V_{\varphi \otimes \theta}$ is the vector space defined by the tensor product of the two B-splines basis $\theta$ and $\varphi$ both with fixed dimensions $L$ and $M$, respectively.   Let $\F_{0T}$ be the induced true sparsity set, i.e. $\psi_T(t,s)=0$ for each $(t,s) \in \F_{0T}$. 
Define the event $E_0$ as
\[
E_0 = \left\{ 
\int_{\F_{0T}} \left[\widehat{\psi}_\lambda (t,s)\right]^2 dt ds =0, \quad 	
\left(\inf_{\F_1 \subset \F_{0T}^C} 
\int_{\F_{1}} \left[\widehat{\psi}_\lambda (t,s)\right]^2 dt ds\right) >0
\right\}.
\]
For $\lambda >0$, $nG>LM$,  $\lambda \to 0$,  $\lambda \sqrt{n}\to \infty$,    
$
\left(\Omega^c_{\cal B} \right)^*\left( {\mathbf{Z}}_0^T {\mathbf{Z}}_1 ({\mathbf{Z}}_1^T {\mathbf{Z}}_1)^{-1} \mathbf{r}_1  \right) <1,
$
 then $\lim_{n\to \infty} \mathbb{P}(E_0) = 1.$
\label{th:consistencyfixeddim}
\end{theorem}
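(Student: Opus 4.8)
The plan is to follow the primal--dual witness strategy for support recovery in penalised convex problems, adapting the overlapping-group analysis of \citet{jenatton2011structured}. By Theorem~\ref{th:existence} the minimiser $\widehat{\bpsi}_\lambda$ of \eqref{eq:min3_init} is unique, so it suffices to exhibit a single vector meeting the Karush--Kuhn--Tucker (KKT) conditions and to show that, with probability tending to one, its coefficient support induces exactly the sparsity set $\F_{0T}$. First I would write the subgradient optimality condition: $\widehat{\bpsi}_\lambda$ solves \eqref{eq:min3_init} if and only if there is $\mathbf{w}\in\partial\Omega(\widehat{\bpsi}_\lambda)$ with $\mathbf{Z}^T(\mathbf{Z}\widehat{\bpsi}_\lambda-\mathbf{y})+\lambda\mathbf{w}=\mathbf{0}$. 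Splitting $\mathbf{Z}=(\mathbf{Z}_1\ \mathbf{Z}_0)$ according to the active index set $\cal I$ and its complement, I separate this identity into an \emph{active block}, in which every group in $\cal B$ is non-zero and the subgradient is uniquely determined (its active part being $\mathbf{r}_1$), and an \emph{inactive block}, in which the requirement reduces to the dual-norm inequality $(\Omega^c_{\cal B})^*(\cdot)\le 1$.

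Next I would construct the candidate. Forcing the inactive coefficients to zero and solving the active-block equation yields the oracle estimator $\widehat{\bpsi}_1=(\mathbf{Z}_1^T\mathbf{Z}_1)^{-1}(\mathbf{Z}_1^T\mathbf{y}-\lambda\mathbf{r}_1)$, which is well defined since $nG>LM$ ensures $\mathbf{Z}_1^T\mathbf{Z}_1$ is invertible. Writing $\mathbf{y}=\mathbf{Z}_1\bpsi_{T,1}+\mathbf{e}$, the deviation $\widehat{\bpsi}_1-\bpsi_{T,1}$ splits into a bias $-\lambda(\mathbf{Z}_1^T\mathbf{Z}_1)^{-1}\mathbf{r}_1$, which vanishes because $\lambda\to0$, and a stochastic term $(\mathbf{Z}_1^T\mathbf{Z}_1)^{-1}\mathbf{Z}_1^T\mathbf{e}$, controlled via Gaussianity of $\mathbf{e}$. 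Hence the active coefficients stay bounded away from zero and no active group collapses; combined with the linear independence of the active tensor-product B-splines on each cell, this gives the strict positivity of $\int_{\F_1}[\widehat{\psi}_\lambda]^2$ over cells $\F_1\subset\F_{0T}^C$, i.e.\ the second half of $E_0$.

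It then remains to verify strict dual feasibility of the inactive block, the crux of the argument. Substituting $\widehat{\bpsi}_1$ into the inactive KKT equation produces the required inactive subgradient $\mathbf{w}_0=\lambda^{-1}\mathbf{Z}_0^T(\mathbf{y}-\mathbf{Z}_1\widehat{\bpsi}_1)$, and I would bound $(\Omega^c_{\cal B})^*(\mathbf{w}_0)$ by the triangle inequality into a deterministic part equal to $(\Omega^c_{\cal B})^*\big(\mathbf{Z}_0^T\mathbf{Z}_1(\mathbf{Z}_1^T\mathbf{Z}_1)^{-1}\mathbf{r}_1\big)$, strictly below one by hypothesis, plus a noise part of order $O_P\big(1/(\lambda\sqrt{n})\big)$ that is negligible because $\lambda\sqrt{n}\to\infty$. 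Thus, with probability tending to one, $(\Omega^c_{\cal B})^*(\mathbf{w}_0)<1$, so the constructed vector satisfies all KKT conditions and, by the uniqueness of Theorem~\ref{th:existence}, is \emph{the} minimiser, with inactive coefficients exactly zero. Finally, the B-spline locality underlying Proposition~\ref{prop:F0} --- that a zero $d\times d$ block is equivalent to $\psi\equiv0$ on the associated cell $\F_{m,l}$ --- converts this coefficient-level recovery into $\int_{\F_{0T}}[\widehat{\psi}_\lambda]^2=0$, giving the first half of $E_0$ and hence $\lim_{n\to\infty}\mathbb{P}(E_0)=1$.

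The main obstacle will be the correct treatment of the overlapping-group subdifferential. Unlike the disjoint group-Lasso, $\partial\Omega$ does not factorise across coordinates: at a point whose zero set is a union of cells, the subgradient collects contributions from every group meeting the active region, so both the active subgradient $\mathbf{r}_1$ and the restricted dual norm $(\Omega^c_{\cal B})^*$ must be handled through the group-wise characterisation of $\partial\Omega$ in \citet{jenatton2011structured}. Making the bias and noise bounds uniform over the overlapping inactive groups --- rather than coordinate by coordinate as for the Lasso --- is the delicate quantitative step.
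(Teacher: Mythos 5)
Your proposal is correct and follows essentially the same route as the paper: the paper's proof simply reduces the event $E_0$ to consistent recovery of the non-zero pattern of $\Psib_T$ and then cites Theorem~6 of \citet{jenatton2011structured}, whose content is precisely the primal--dual witness argument you spell out (oracle construction on the active groups, strict dual feasibility of the inactive block under the irrepresentability condition, and the scaling $\lambda\to0$, $\lambda\sqrt{n}\to\infty$). The only difference is that you open the black box and re-derive the cited result, adding the (correct) final step converting coefficient-level pattern recovery into the integral statements of $E_0$ via the B-spline locality of Proposition~\ref{prop:F0}.
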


The next theorem avoids the strict assumption of $\psi_T \in V_{\varphi \otimes \theta}$. The main idea is to introduce suitable approximations of $\psi_T$ and the induced $ \F_{0T}$ and study  consistency of the estimator to those approximations. Specifically, we let $\bar \F_{0T}$ be the subset of $\F_{0T}$ of maximum Lebesgue measure  belonging to the power set ${\cal P}(F)$ and define  $\psi_T^P$ as
\[
\psi_T^P = \arg\min_{\psi \in V_{\varphi \otimes \theta}; \F_{0\psi} = \bar \F_{0T}} 
\int_\F \left(\psi(t,s) - \psi_T(t,s)\right)^2dtds.
\]
In what follows $\cal B, I,$ ${\mathbf{Z}}_1$, ${\mathbf{Z}}_2$ and the related norm $\Omega_{\cal B}^c$ and its dual, are defined with respect to the unique  $\Psib_T^P$ associated to $\psi_T^P$.
%
\begin{theorem}
	\label{th:consistency2}
	Let $\psi_T$ be the true HS operator and  $\F_{0T}$ the true induced sparsity set, i.e. $\psi_T(t,s)=0$ for each $(t,s) \in \F_{0T}$. 
	Define the event $E_k$ as
\[
E_k = \left\{ 
\int_{\F_{0T}} \left[\widehat{\psi}_\lambda (t,s)\right]^2 dt ds \leq \frac{k}{LM}, \quad 	
\left(\inf_{\F_1 \subset \F_{0T}^C} 
\int_{\F_{1}} \left[\widehat{\psi}_\lambda (t,s)\right]^2 dt ds\right) >0
\right\},
\]
with $k$ a positive constant.	Let $D<LM$ be the number of columns of ${\mathbf{Z}}_1$.  For  $\tau >0$, assume $\left(\Omega^c_{\cal B} \right)^*\left( {\mathbf{Z}}_0^T {\mathbf{Z}}_1 ({\mathbf{Z}}_1^T {\mathbf{Z}}_1)^{-1} \mathbf{r}_1  \right) <1-\tau,$ with
\begin{equation}
	\frac{\mu(\F_{0T}^C)}{\mu (\F)} \leq  \frac{(nG-1)(d+1)^2}{(L-1)(M-1)}.
	\label{eq:sparsitycondition}
\end{equation}
If $\tau \lambda \sqrt{n} \geq \sigma C_3$ and $\lambda\sqrt{D}\leq C_4$, then the probability of the event $E_k$ is lower bounded by
\[
1-\left(\exp\left\{-\frac{n\lambda^2\tau^2 C_1}{2\sigma^2}\right\}
+2 D\exp\left\{-\frac{n\lambda^2\tau^2 C_2}{2D\sigma^2}\right\}\right),
\] 
where $C_1, C_2, C_3,$ and $C_4$ are positive constants. 
\end{theorem}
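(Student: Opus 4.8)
The plan is to adapt the primal--dual certificate argument of \citet{jenatton2011structured} that underlies Theorem~\ref{th:consistency1}, but now centred at the projected operator $\psi_T^P$ rather than at $\psi_T$ itself, so that the unavoidable approximation bias is carried explicitly through the analysis. First I would reduce the two functional statements defining $E_k$ to statements about the coefficient vector $\widehat{\boldsymbol{\psi}}_\lambda$. By the tensor-product representation \eqref{eq:tensorproduct} and Proposition~\ref{prop:F0}, $\int_{\F_1}[\widehat{\psi}_\lambda]^2\,dt\,ds$ is a positive-definite quadratic form in the active blocks of $\widehat{\boldsymbol{\psi}}_\lambda$, so the second requirement in $E_k$ (positivity off $\F_{0T}$) is equivalent to the active groups $b\in\cal B$ of $\widehat{\boldsymbol{\psi}}_\lambda$ being non-zero, while the first requirement (the integral over $\F_{0T}$ bounded by $k/(LM)$) splits into an exact-zero part on $\bar{\F}_{0T}\in\mathcal{P}(F)$ and a deterministic approximation part on $\F_{0T}\setminus\bar{\F}_{0T}$ that I would bound by $\int_\F(\psi_T^P-\psi_T)^2\,dt\,ds$, which is of order $1/(LM)$ by tensor-spline approximation together with the maximality of $\bar{\F}_{0T}$.

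Next I would write the observation model in coefficient form as $\mathbf{y}=\mathbf{Z}\boldsymbol{\psi}_T^P+\mathbf{r}+\mathbf{e}$, where $\boldsymbol{\psi}_T^P=\vec(\Psib_T^P)$, $\mathbf{r}$ is the deterministic bias coming from $\psi_T\notin V_{\varphi\otimes\theta}$, and $\mathbf{e}$ is the Gaussian noise with variance $\sigma^2$. I would then run the witness construction: define the restricted estimator $\widehat{\boldsymbol{\psi}}_{\cal I}$ by minimising $\ell$ over vectors supported on $\cal I$, which only involves $\mathbf{Z}_1$; here the sparsity condition \eqref{eq:sparsitycondition} is used to guarantee $D<nG$ and hence the invertibility of $\mathbf{Z}_1^T\mathbf{Z}_1$. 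Setting the remaining coordinates to zero, I would verify the Karush--Kuhn--Tucker conditions by exhibiting, for every inactive group $b\notin\cal B$, a subgradient inside the unit ball of the appropriate dual norm. Feeding the explicit restricted solution into the inactive-group equations produces exactly the vector $\mathbf{Z}_0^T\mathbf{Z}_1(\mathbf{Z}_1^T\mathbf{Z}_1)^{-1}\mathbf{r}_1$ plus a noise- and bias-dependent perturbation; the irrepresentability hypothesis $(\Omega^c_{\cal B})^*(\cdot)<1-\tau$ then leaves a slack $\tau$ that must absorb this perturbation.

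The probabilistic content is obtained by controlling that perturbation. The term $\mathbf{Z}_0^T(\mathbf{I}-\mathbf{Z}_1(\mathbf{Z}_1^T\mathbf{Z}_1)^{-1}\mathbf{Z}_1^T)\mathbf{e}$ feeding the residual certificate is a linear functional of the Gaussian vector, whose $(\Omega^c_{\cal B})^*$-norm I would bound by a single sub-Gaussian tail, yielding the first exponential $\exp\{-n\lambda^2\tau^2C_1/(2\sigma^2)\}$; the condition $\tau\lambda\sqrt{n}\ge\sigma C_3$ ensures this deviation stays below the slack $\tau$. The second, coordinatewise requirement---that each of the $D$ active coefficients of $\widehat{\boldsymbol{\psi}}_{\cal I}$ retain the magnitude of $\Psib_T^P$ so that no active group collapses to zero---is handled by a union bound over the $D$ entries of $(\mathbf{Z}_1^T\mathbf{Z}_1)^{-1}\mathbf{Z}_1^T\mathbf{e}$, giving $2D\exp\{-n\lambda^2\tau^2C_2/(2D\sigma^2)\}$, with $\lambda\sqrt{D}\le C_4$ guaranteeing that the penalty-induced bias on the active block is small enough to leave those coefficients bounded away from zero. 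Combining the complements of these two events by a union bound yields the stated lower bound on $\mathbb{P}(E_k)$.

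The hard part will be the bias bookkeeping forced by dropping the assumption $\psi_T\in V_{\varphi\otimes\theta}$: unlike in Theorem~\ref{th:consistency1}, the deterministic residual $\mathbf{r}$ enters both the dual-feasibility inequality (where it must not consume more than a fixed fraction of the slack $\tau$) and the lower bound on the active coefficients, and simultaneously it is precisely what prevents the integral over $\F_{0T}$ from being exactly zero. Making these uses compatible---that is, showing that the same approximation-theoretic estimate $\int_\F(\psi_T^P-\psi_T)^2=O(1/(LM))$ both keeps the certificate feasible and delivers exactly the $k/(LM)$ tolerance in $E_k$---is the delicate step, and it is where the interplay between the maximality of $\bar{\F}_{0T}$, the spline order $d$ appearing in \eqref{eq:sparsitycondition}, and the scaling of $\lambda$ must be carefully balanced.
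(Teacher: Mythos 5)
Your proposal follows essentially the same route as the paper: both centre the analysis at the projected kernel $\psi_T^P$ and its representable sparsity set $\bar\F_{0T}$, reduce the event $E_k$ to recovery of the non-zero pattern of $\Psib_T^P$, use condition \eqref{eq:sparsitycondition} to guarantee that the number of active coefficients is below $nG$ so that $\mathbf{Z}_1^T\mathbf{Z}_1$ is invertible, and then rely on the overlapping group-Lasso consistency machinery of \citet{jenatton2011structured} (their Theorem 7), which you unfold as an explicit primal--dual witness argument where the paper simply standardizes $\mathbf{Z}$ and cites the result. The one substantive divergence is the origin of the $k/(LM)$ tolerance: the paper obtains it from boundedness of $\widehat{\psi}_\lambda$ together with the fact that $\F_{0T}\setminus\bar\F_{0T}$ is contained in finitely many cells of measure $O(1/(LM))$, whereas you route it through the $L_2$ approximation error $\int_\F(\psi_T^P-\psi_T)^2\,dt\,ds$, which additionally requires an estimation-error bound relating $\widehat{\psi}_\lambda$ to $\psi_T^P$ on that region rather than support recovery alone.
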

\section{Simulation study}
\label{sec:sim}

In this section we assess the empirical performance of our LSFR model by means of simulations. We generated a sample of a functional covariate $x_1(t), \ldots, x_n(t)$ from cubic B-splines basis with 15 evenly spaced knots between 0 and 1. These functional data are assumed to be observed on an equispaced grid of $100$ points and the sample size $n$ is $n\in \{50, 150\}$.
Conditionally on these functional covariates we generated functional responses as $y_i(s) = \int \psi_k(t,s)x_i(t) dt + e_i(s)$ under four different levels of sparsity of the operator $\psi_k$ and two different signal-to-noise ratios.

The first and the second scenario focuses on a quasi--concurrent and historical relations, respectively. 
The third scenario is associated to a local dependence between the dependent variable and the functional covariate with the region $\mathcal{F}_0$ being the union of two non--intersecting rectangles. Finally,   last scenario assumes no sparsity. Graphical representations of the kernels $\psi_k$ are showed in  Figure \ref{fig:psi3d}.
%
%
We vary the signal-to-noise ratio  $\mbox{SNR} \in \{2, 4\}$ with 
\[
\mbox{SNR} = \frac{1}{n} \sum_{i=1}^n \left\{ \int \left[\int \psi(t,s)x_i(t) dt\right]^2 ds \ \bigg/  \int e_i^2(s)ds \right\}.
\]
We choose $d=4$ and $L=M=20$ leading to $400$ coefficients in $\Psib$.
For each combination of $\psi_k$, $n$, and \mbox{\mbox{SNR}} we simulated $R=100$ replicated datasets, leading to $4 \times 2 \times 2 \times 100 = 1600$ independent datasets.

We compare our LSFR with the results obtained minimizing  \eqref{eq:min1} with Ridge, Lasso, and Elastic-net penalties \citep{hastierr}. Classical group-Lasso penalty has not been fitted due to its limitations discussed in Section \ref{sec:kerbasis}. We expect the Ridge approach to be the best competitor in the last scenario and in general if estimation is evaluated in the non-sparse regions of the true kernels. On the other side we expect Lasso to be the best competitor in detecting the sparse regions.  The Elastic-net minimization and our LSFR solution are expected to take the best of both Lasso and Ridge.

All methods depend on the choice of the tuning parameter $\lambda$. We run each model  on a fine grid of $\lambda$ and choose as final estimate for each method, the one minimizing the prediction error for the general estimator $\breve \psi_\lambda$ of $\psi$, i.e. 
\[
 \breve \psi = \arg \min_\lambda \sum_{i=1}^{200} \bigg(y_i(s) - \int \breve{\psi}_{\lambda}(t,s) x_i(t) dt \bigg)^2,
\]
where the sum is defined on external validation set of size 200. The range of possible $\lambda$ depends on the specific methods and for LFSR we refer to \eqref{eq:ovglasso_lambdamax} in Appendix. For all competing methods we rely on their implementation in the package \texttt{glmnet} \citep{glmnet}.

\begin{figure}[t]
    \centering
    \begin{minipage}{.24\textwidth}
    \centering
    \includegraphics[width=1\textwidth]{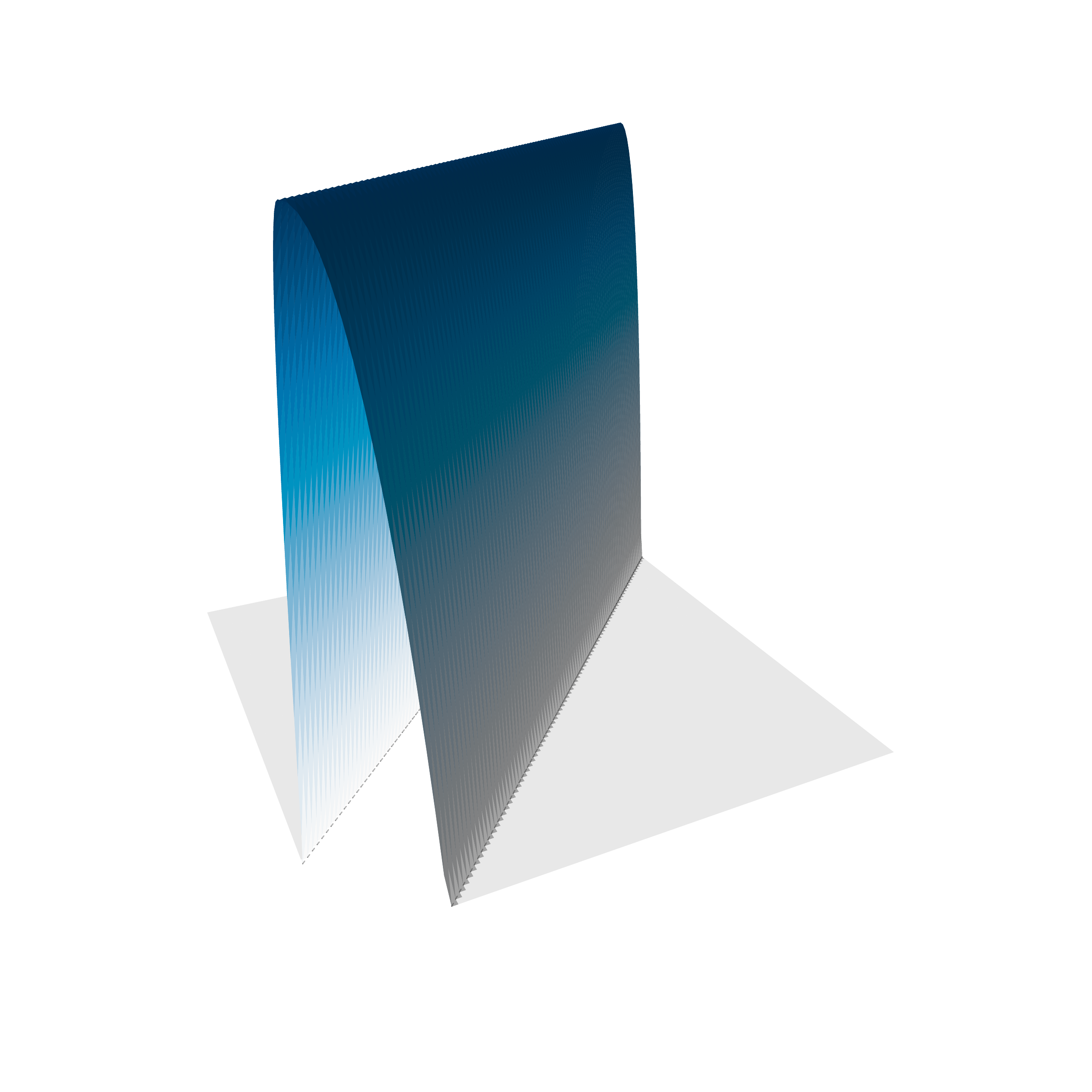}
    \end{minipage}%
    \begin{minipage}{.24\textwidth}
    \centering
    \includegraphics[width=1\textwidth]{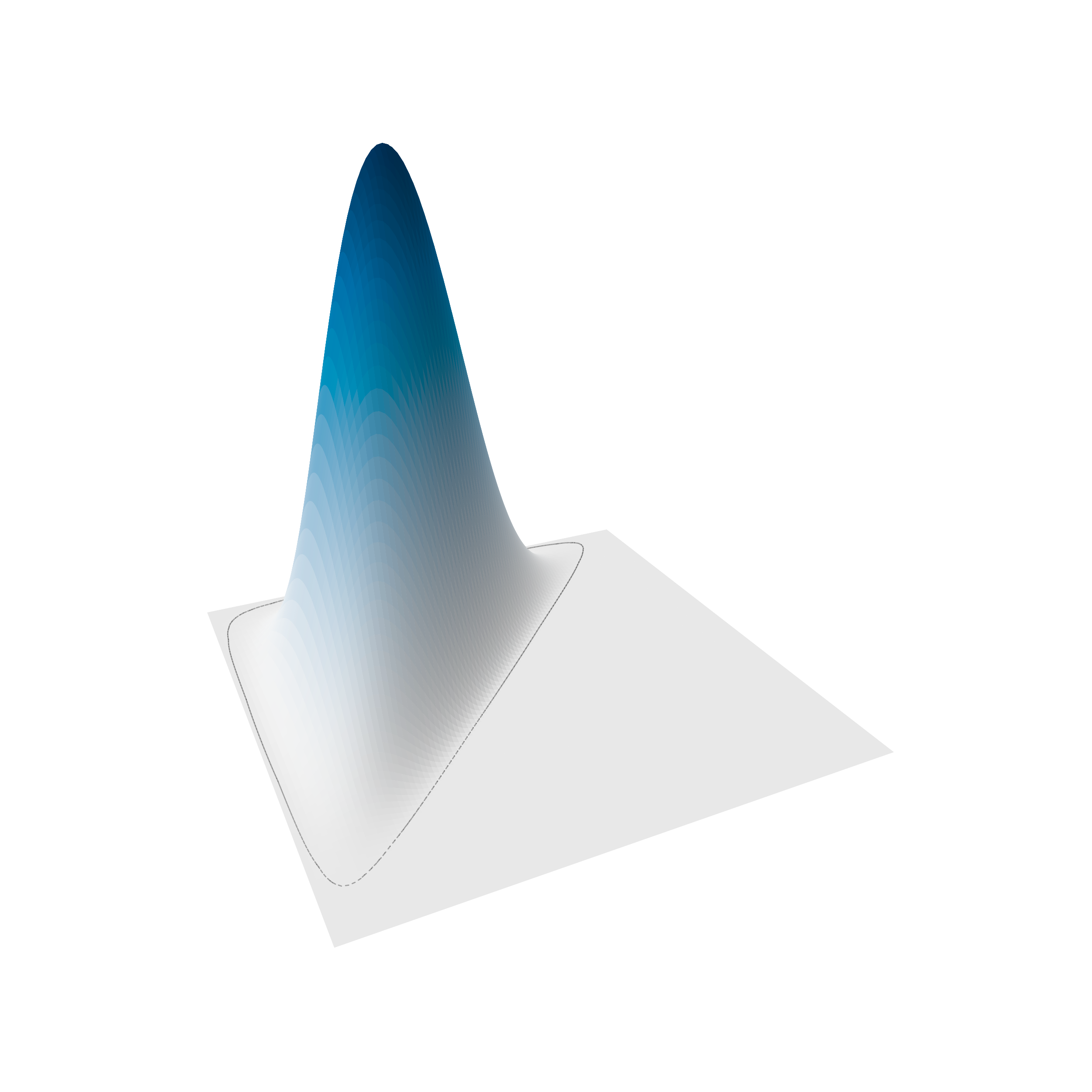}
    \end{minipage}%
    \begin{minipage}{.24\textwidth}
    \centering
    \vspace*{1cm}
    \includegraphics[width=1\textwidth]{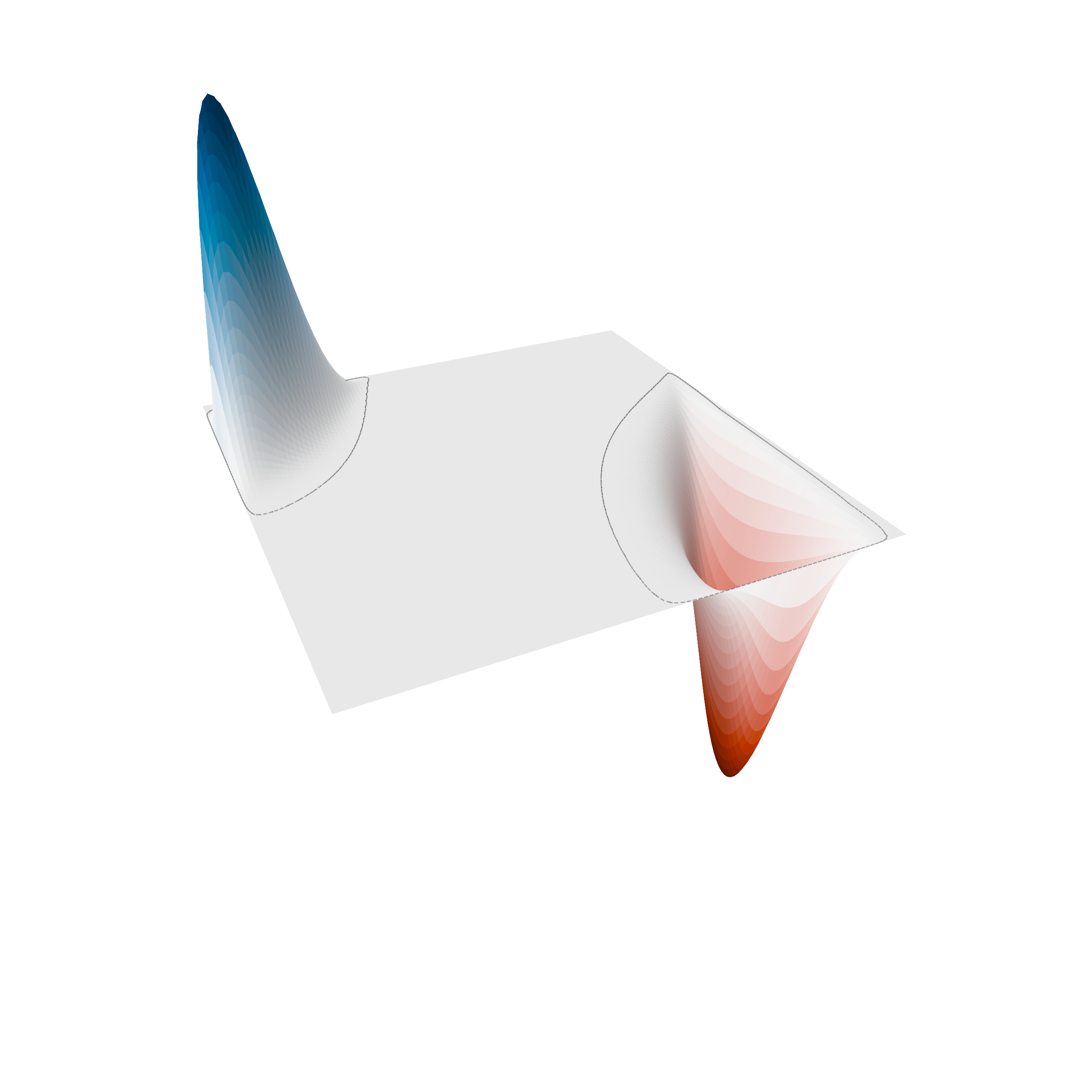}
    \end{minipage}
    \begin{minipage}{.24\textwidth}
    \centering
    \includegraphics[width=1\textwidth]{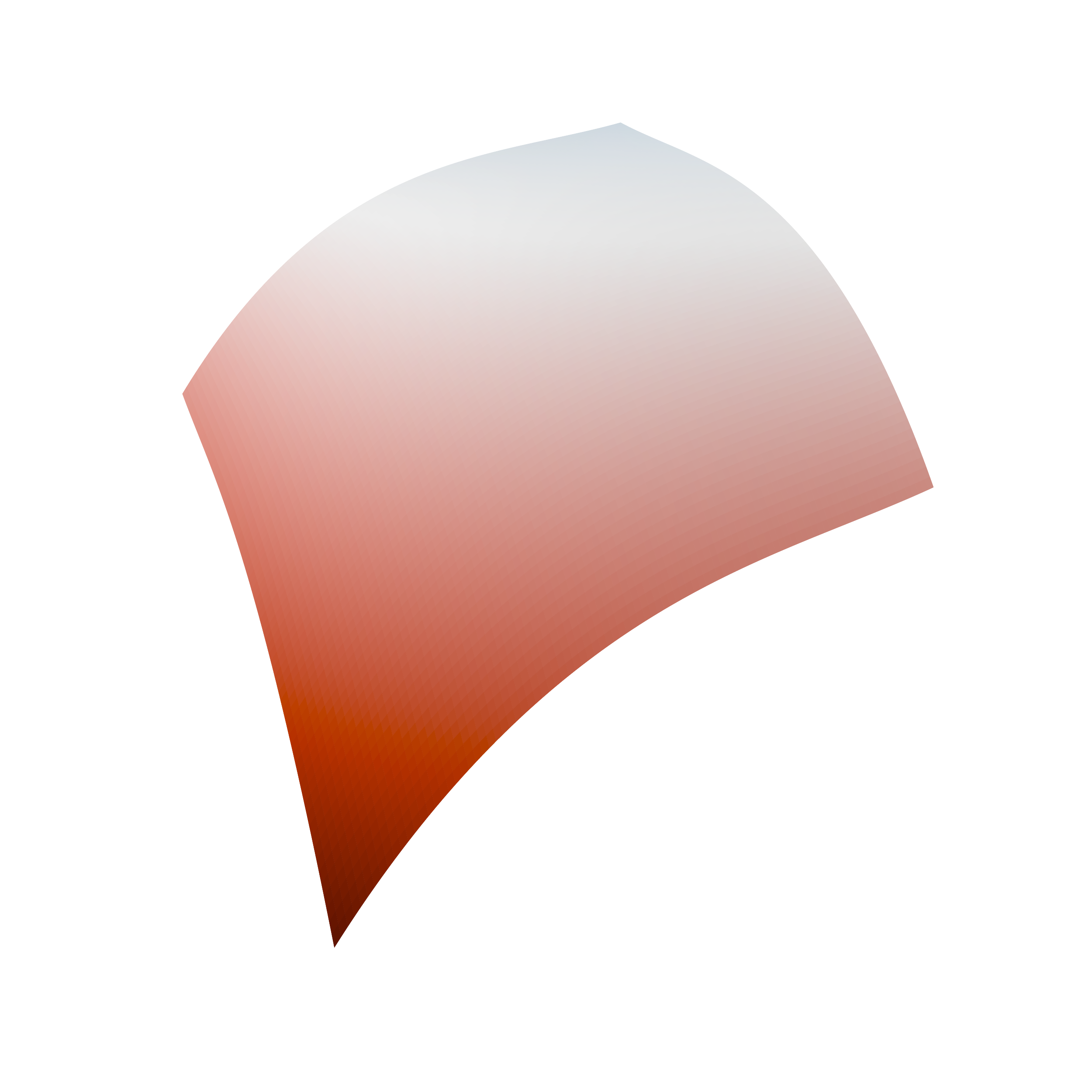}
    \end{minipage}%
    \caption{Three dimensional representations of $\psi_k$ with, from left to right, $k=1, 2, 3, 4$.}
    \label{fig:psi3d}
\end{figure}


The final estimates are  evaluated in both subsets of $\mathcal{F}$ where the true function is zero or not, on a third independent test set of size 1{,}000. Specifically, we define two partial integrated square error measures as
\begin{equation}
ISE_0(\breve \psi)  = \int_{\mathcal{F}_0} \bigg( \breve{\psi}(t, s) - \psi(t, s) \bigg)^2 dt ds, \quad 
ISE_1(\breve \psi)  = \int_{\mathcal{F}_0^C} \bigg( \breve{\psi}(t, s) - \psi(t, s) \bigg)^2 dt ds,
\end{equation}
representing the error in not estimating the sparsity of the true kernel  and the estimation error when the true kernel is smooth, respectively. As a global measure, we use the integrated squared error on the whole $\F$, which can be obtained as a suitable convex combination of the previous measures, i.e.
\begin{equation}
ISE(\breve{\psi})  = ISE_0(\breve \psi)  \mu(\mathcal{F}_0) + ISE_1(\breve \psi)  \mu(\mathcal{F}_1).
\end{equation}

\singlespacing
\begin{table}
\caption{Monte Carlo means and standard deviations ($\times 10^5$) of the $ISE_0$ and $ISE_1$ between the estimated coefficients  and the true coefficients $\psi_k$, $k=1,2,3,4$ over $100$ replicates.
Rows labelled S1 report the results for the default simulation settings, i.e. $n=50$, $SNR=4$, $L=M=20$; Rows labelled S2 report the results for $SNR =2$ and default remaining settings; 	Rows labelled S3 report the results for $L=M=40$ and default remaining settings; Rows labelled S4 report the results for $n = 150$ and default remaining settings.}
	\centering
	\resizebox{\textwidth}{!}{
	\begin{tabular}{crrrrrrrr}\toprule
		& LSFR & & Lasso & & Ridge & & Enet & \\
		\cline{2-9}\\
		& \multicolumn{1}{c}{$ISE_0$} & 
		\multicolumn{1}{c}{\textit{$ISE_1$}} & 
		\multicolumn{1}{c}{$ISE_0$} & 
		\multicolumn{1}{c}{\textit{$ISE_1$}} & 
		\multicolumn{1}{c}{$ISE_0$} & 
		\multicolumn{1}{c}{\textit{$ISE_1$}} & 
		\multicolumn{1}{c}{$ISE_0$} &
		\multicolumn{1}{c}{\textit{$ISE_1$}} \\
		\cline{2-3}
		\cline{4-5}
		\cline{6-7}
		\cline{8-9}\\
		{\small $\psi_1$}\\
		\vspace*{.1cm}
		S1 & 0.18 \scriptsize{(0.04)}& 1.07 \scriptsize{(0.20)}&  0.18 \scriptsize{(0.06)}&  2.76 \scriptsize{(0.44)}& 0.66 \scriptsize{(0.11)} & 1.12 \scriptsize{(0.22)}& 0.26 \scriptsize{(0.06)} & 1.34 \scriptsize{(0.23)}  \\
		\vspace*{.1cm}
		S2   & 0.28 \scriptsize{(0.08)}&  1.63 \scriptsize{(0.34)}& 0.31 \scriptsize{(0.12)}& 4.08 \scriptsize{(0.67)}& 0.98 \scriptsize{(0.17)}& 1.74 \scriptsize{(0.35)}& 0.42 \scriptsize{(0.12)} & 2.03 \scriptsize{(0.40)}   \\
		\vspace*{.1cm}
		 S3   & 0.14 \scriptsize{(0.05)}& 1.66 \scriptsize{(0.36)} & 0.37  \scriptsize{(0.13)}& 16.24 \scriptsize{(2.06)}& 1.02 \scriptsize{(0.20)}& 1.48 \scriptsize{(0.31)}& 0.39 \scriptsize{(0.11)} & 2.08 \scriptsize{(0.40)}  \\
		\vspace*{.1cm}
		S4   & 0.11 \scriptsize{(0.02)}& 0.58 \scriptsize{(0.09)} & 0.08  \scriptsize{(0.03)}& 1.45 \scriptsize{(0.24)}& 0.34 \scriptsize{(0.06)}& 0.56 \scriptsize{(0.09)}& 0.14 \scriptsize{(0.03)} & 0.71 \scriptsize{(0.11)}  \\
		\vspace*{.1cm}
		{\small $\psi_2$}\\
		\vspace*{.1cm}
		S1 & 0.04 \scriptsize{(0.03)}& 0.51 \scriptsize{(0.14)}&  0.12 \scriptsize{(0.07)}& 1.80  \scriptsize{(0.37)}& 0.52 \scriptsize{(0.14)} & 0.75 \scriptsize{(0.18)}& 0.16 \scriptsize{(0.08)} & 0.68 \scriptsize{(0.17)}  \\
		\vspace*{.1cm}
		S2   & 0.08 \scriptsize{(0.05)}&  0.82 \scriptsize{(0.22)}& 0.23  \scriptsize{(0.13)}& 2.42 \scriptsize{(0.47)}& 0.86 \scriptsize{(0.21)}& 1.21 \scriptsize{(0.26)}& 0.30 \scriptsize{(0.13)} & 1.09 \scriptsize{(0.25)}   \\
		\vspace*{.1cm}
		 S3   & 0.08 \scriptsize{(0.05)}& 0.74 \scriptsize{(0.21)} & 0.32  \scriptsize{(0.16)}& 9.98 \scriptsize{(2.75)}& 0.96 \scriptsize{(0.25)}& 1.16 \scriptsize{(0.26)}& 0.39 \scriptsize{(0.14)} & 1.04 \scriptsize{(0.23)}  \\
		\vspace*{.1cm}
		S4   & 0.02 \scriptsize{(0.01)}& 0.23 \scriptsize{(0.05)} & 0.04  \scriptsize{(0.02)}& 1.12 \scriptsize{(0.22)}& 0.26 \scriptsize{(0.07)}& 0.33 \scriptsize{(0.07)}& 0.06 \scriptsize{(0.03)} & 0.29 \scriptsize{(0.06)}  \\
		\vspace*{.1cm}
		{\small $\psi_3$}\\
		\vspace*{.1cm}
		S1 &  0.26 \scriptsize{(0.06)}& 0.58 \scriptsize{(0.11)} &  0.25 \scriptsize {(0.09)}& 1.01 \scriptsize{(0.19)} & 0.86 \scriptsize{(0.19)} & 1.10 \scriptsize{(0.16)} & 0.32 \scriptsize{(0.09)} & 0.69 \scriptsize{(0.11)}  \\
		\vspace*{.1cm}
		S2 & 0.32 \scriptsize{(0.09)}&  0.82 \scriptsize{(0.20)} & 0.32  \scriptsize{(0.15)}& 1.29 \scriptsize{(0.27)}& 1.14 \scriptsize{(0.30)}& 1.57 \scriptsize{(0.27)} & 0.40 \scriptsize{(0.15)} & 0.96 \scriptsize{(0.20)}   \\
		\vspace*{.1cm}
		 S3  & 0.25 \scriptsize{(0.07)}&  0.62 \scriptsize{(0.14)} & 0.47 \scriptsize{(0.14)}&  7.17 \scriptsize{(1.45)}& 1.18 \scriptsize{(0.29)}& 1.40 \scriptsize{(0.20)}& 0.41 \scriptsize{(0.12)} & 0.91 \scriptsize{(0.17)}  \\
		\vspace*{.1cm}
		S4   & 0.22 \scriptsize{(0.04)}& 0.36 \scriptsize{(0.06)} & 0.21  \scriptsize{(0.05)}&  0.68 \scriptsize{(0.13)}& 0.56 \scriptsize{(0.11)}& 0.61 \scriptsize{(0.09)}& 0.27 \scriptsize{(0.07)} & 0.46 \scriptsize{(0.08)}  \\
		\vspace*{.1cm}
		{\small $\psi_4$}\\
		\vspace*{.1cm}
		 S1 & -- & 0.76 \scriptsize{(0.11)}&  -- & 2.13 \scriptsize{(0.26)}& --  & 0.93 \scriptsize{(0.12)}& --  & 1.08 \scriptsize{(0.13)}  \\
		\vspace*{.1cm}
		 S2  & -- & 1.07 \scriptsize{(0.16)}& --  & 2.62 \scriptsize{(0.26)}& -- & 1.30   \scriptsize{(0.17)}& --  & 1.48 \scriptsize{(0.17)}   \\
		\vspace*{.1cm}
		 S3  & -- & 1.04 \scriptsize{(0.18)} & -- & 5.87 \scriptsize{(0.81)}& -- & 1.10 \scriptsize{(0.18)}& --  & 1.34 \scriptsize{(0.20)}  \\
		\vspace*{.1cm}
		 S4   & -- & 0.42 \scriptsize{(0.06)} & --  & 1.09 \scriptsize{(0.17)}& --& 0.53 \scriptsize{(0.06)}& --  & 0.61 \scriptsize{(0.08)}  \\
\bottomrule
	\end{tabular}
	}
	\label{tab:tabellauno}
\end{table} 
\doublespacing

\subsection{Results}

Table 1 reports $ISE_0$ and $ISE_1$ for four different settings. As default setting, we consider  $n=50$, $\mbox{SNR}=4$ and $M=L=20$. As second setting, we decrease the $\mbox{SNR}=2$ while keeping fixed $n=50$ and $M=L=20$. In the third, we assess the effect of changing the resolution of the kernels' basis representations letting $M=L=40$ with $n=50$ and $\mbox{SNR}=4$. Finally, we assess the effect of a bigger sample size letting $n=150$ while keeping the default $\mbox{SNR}=4$ and $M=L=20$. For all these settings, the performance of each method is evaluated for all the four kernels. 

The proposed LSFR model exhibits good performance consistently in all settings and considering both the evaluation metrics. Specifically, our approach achieves a uniformly lower $ISE_0$ both with respect to Ridge and Elastic-net in all settings while presenting similar performance with respect to  Lasso. In fact the results in terms of $ISE_0$ are broadly comparable for our LSFR and the Lasso for the estimation of the first and third kernel, while our LSFR  is even slightly better in estimating the second kernel. Looking at  $ISE_1$, as expected, the performance of the Lasso and of the Ridge and Elastic-net are flipped. The proposed LSFR model, however, consistently outperforms all competing methods also in terms of $ISE_1$.

Different values of \mbox{\mbox{SNR}}, $n$, or $M$ and $L$ have different impacts on the results. Lowering the signal-to-noise ratio or increasing the sample size reduces ---resp. increases--- the errors almost proportionally among different estimation procedures. On the contrary, changing the number of basis from 20 to 40 has a relevant impact on the results. Particularly, a dramatic increase of  $ISE_1$ is observed for  Lasso, in all scenarios. This can be explained by the fact that using a larger number of bases is equivalent to reduce the support of each element of the basis. Consistently with this,  Lasso penalizes coefficients that are much more correlated, yielding to an erratic sequence of estimated coefficients. Conversely, the larger number of bases of the third setting does not negatively affect the remaining methods including our LSFR model, but instead it allows for a potentially better identification of small sparse regions impossible to detect with a lower number of bases. 

Figure \ref{fig2} reports the boxplots of the global relative efficiency of the three competing methods, defined as the ratio between the three different $ISE$ and the $ISE$ achieved by our LSFR model in the default setting. All boxplots are above one,  witnessing that the proposed method uniformly attains a lower $ISE$. Exception made for the third scenario, Lasso is the worst method, while Elastic-net is the most flexible, as expected. Similar results can be noticed for the other simulation settings. 

\begin{figure}[t]
\centering
\includegraphics[scale=.4]{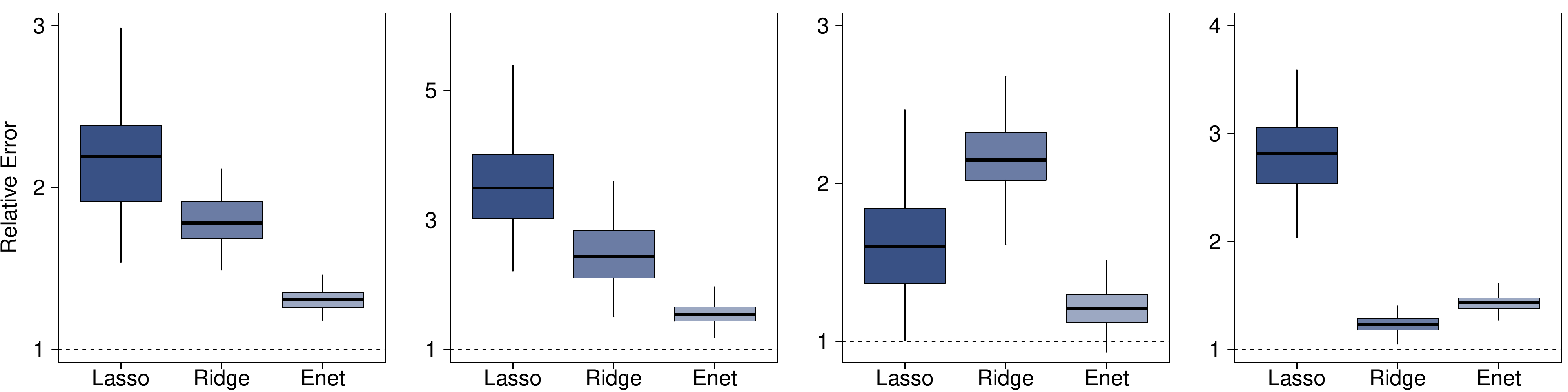}
\caption{Relative ISE for the simulation study under the default setting and for the four scenarios.}
\label{fig2}
\end{figure}

\section{Applications}
\label{sec:app}

\subsection{Swedish Mortality}
\label{sec:swedish}

We apply the proposed LSFR model to the well-known Swedish Mortality dataset, available from the Human Mortality Database and considered one of the most reliable dataset on long-term longitudinal mortality. We focus on the analysis of the log-hazard rate functions of the Swedish female population between the years 1751 and 1894. The goal of our analysis is to model the log-hazard function $y_i$ at a specific calendar year $i$ by using the log-hazard function at previous year $y_{i-1}$. The log-hazard rate for year $i$ and age $s$ is computed as the logarithm of the ratio between women born on year $i$ who died at age $s$ and women born on year $i$ still alive at age $s$. 
An underlying autoregressive linear relation between $y_{i-1}$ and $y_i$ is assumed and specifically
\[
y_i(s) = \int y_{i-1}(t)\psi(t,s) dt + e_{i}(s).
\]
The estimated kernel can be interpreted as the influence of the log-hazard rate at year $i-1$ and age $t$ on the log-hazard rate at year $i$ and age $s$. Existing studies \citep{FDAR, chiou_muller} show that the hazard function at year $i$ and age $s$ is mainly influenced by the hazard function at the  previous year $i-1$ at age $t = s-1$, resembling a quasi--concurrent relation. However, none of these studies reports the total absence of relation when $t$ and $s$ are far away and the corresponding estimated surfaces exhibit nonvanishing fluctuations even near the boundaries of their bivariate domains. 

\begin{figure}[t]
    \centering
    \begin{minipage}{.4\textwidth}
    \vspace*{-1.7cm}
    \hspace*{-1.5cm}
    \includegraphics[width=1.3\textwidth]{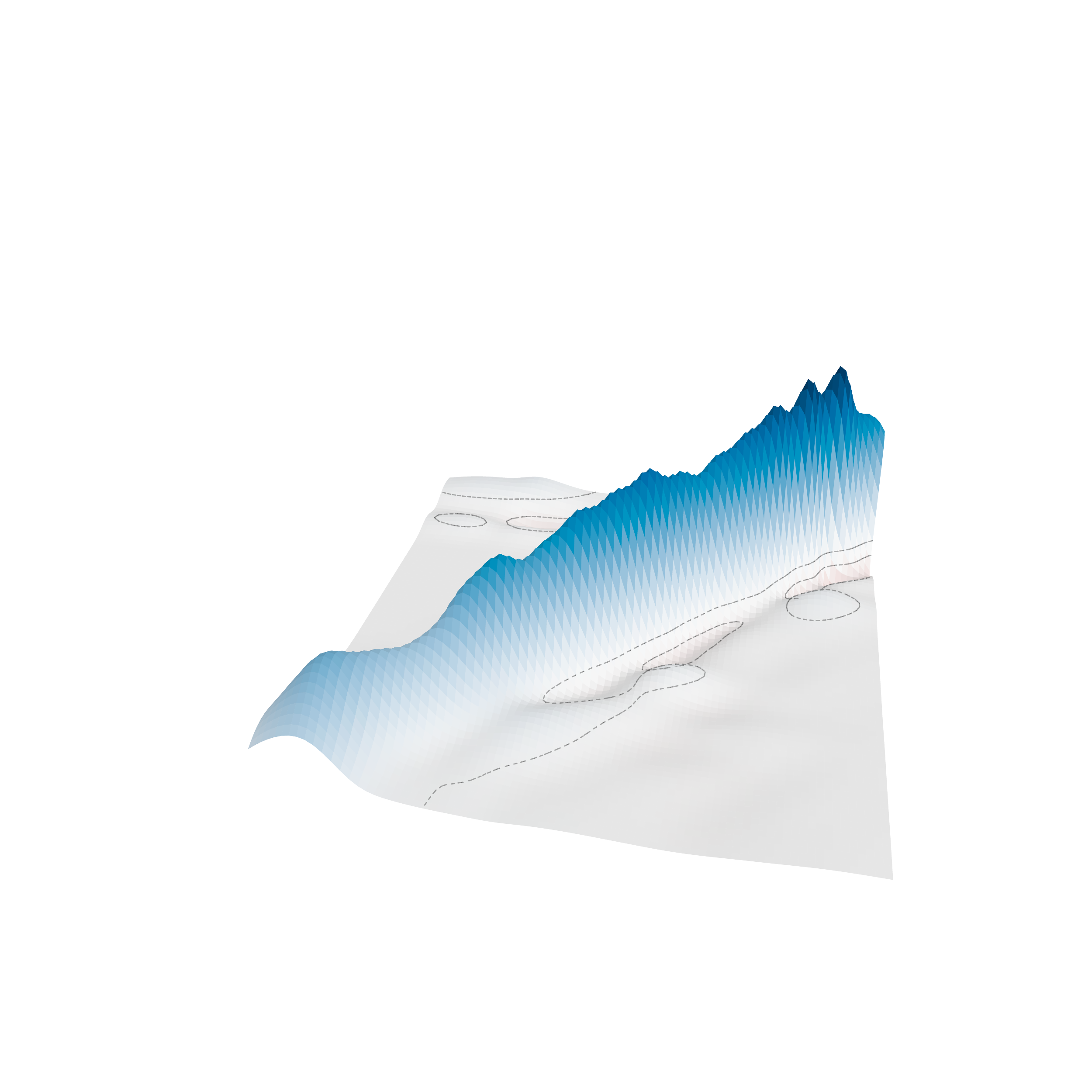}
    \end{minipage}%
    \begin{minipage}{.4\textwidth}
    \centering
    \includegraphics[width=.8\textwidth]{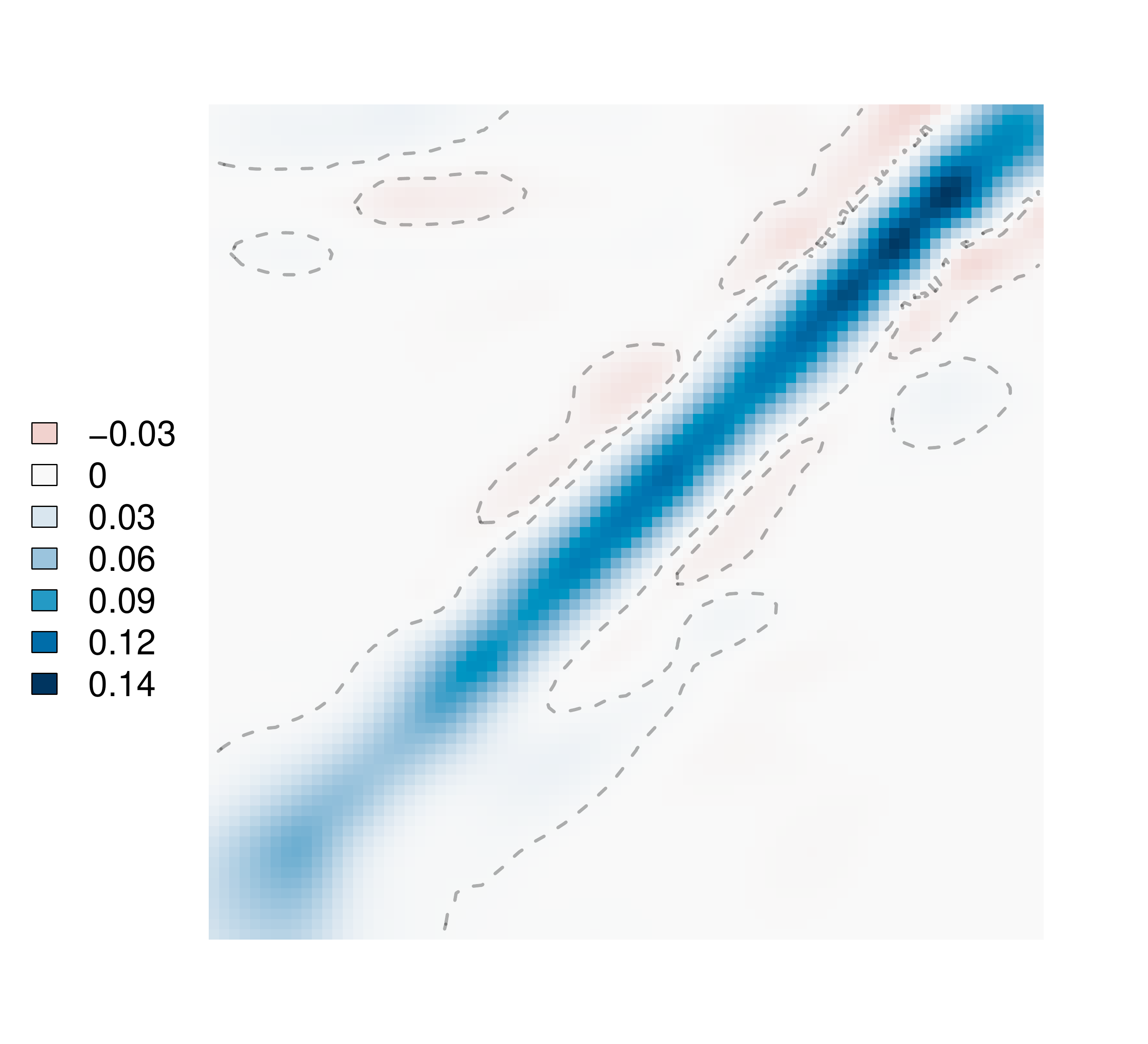}
    \end{minipage}
    \caption{Estimated regression surface for Swedish Mortality data. Left: perspective plot. Right: heatmap.}
    \label{fig:swedish}
\end{figure}

We implement the proposed locally sparse estimator by minimizing the objective (\ref{eq:min3_init}) with $d=4$, $M = L = 20$ basis functions on each dimension and we select the optimal value of $\lambda$ by means of cross--validation. Perspective and contour plots of the estimated kernel are depicted in Figure~\ref{fig:swedish}. 

Our estimate shows a marked positive diagonal confirming the positive influence on the log-hazard rate at age $s$ of the previous year's curve evaluated on a neighborhood of $s$. At the same time, the flat zero regions outside the diagonal suggest that there is no influence of the curves evaluated at distant ages.
Our estimate is more regular than previous approaches and its qualitative interpretation sharper and easier. Refer,  for example, to Figure 10.11 of \cite{FDAR}.

This result witnesses the practical relevance of adopting the proposed approach. Indeed, the resulting estimates, while being reminiscent of a concurrent model---inheriting its ease of interpretation---it gives further insights and improves the fit, representing the desired intermediate  solution between the concurrent and nonconcurrent models.

\subsection{Italian Gas Market}

\begin{figure}[t]
    \centering
    \begin{minipage}{.45\textwidth}
    \centering
    \includegraphics[width=.8\textwidth]{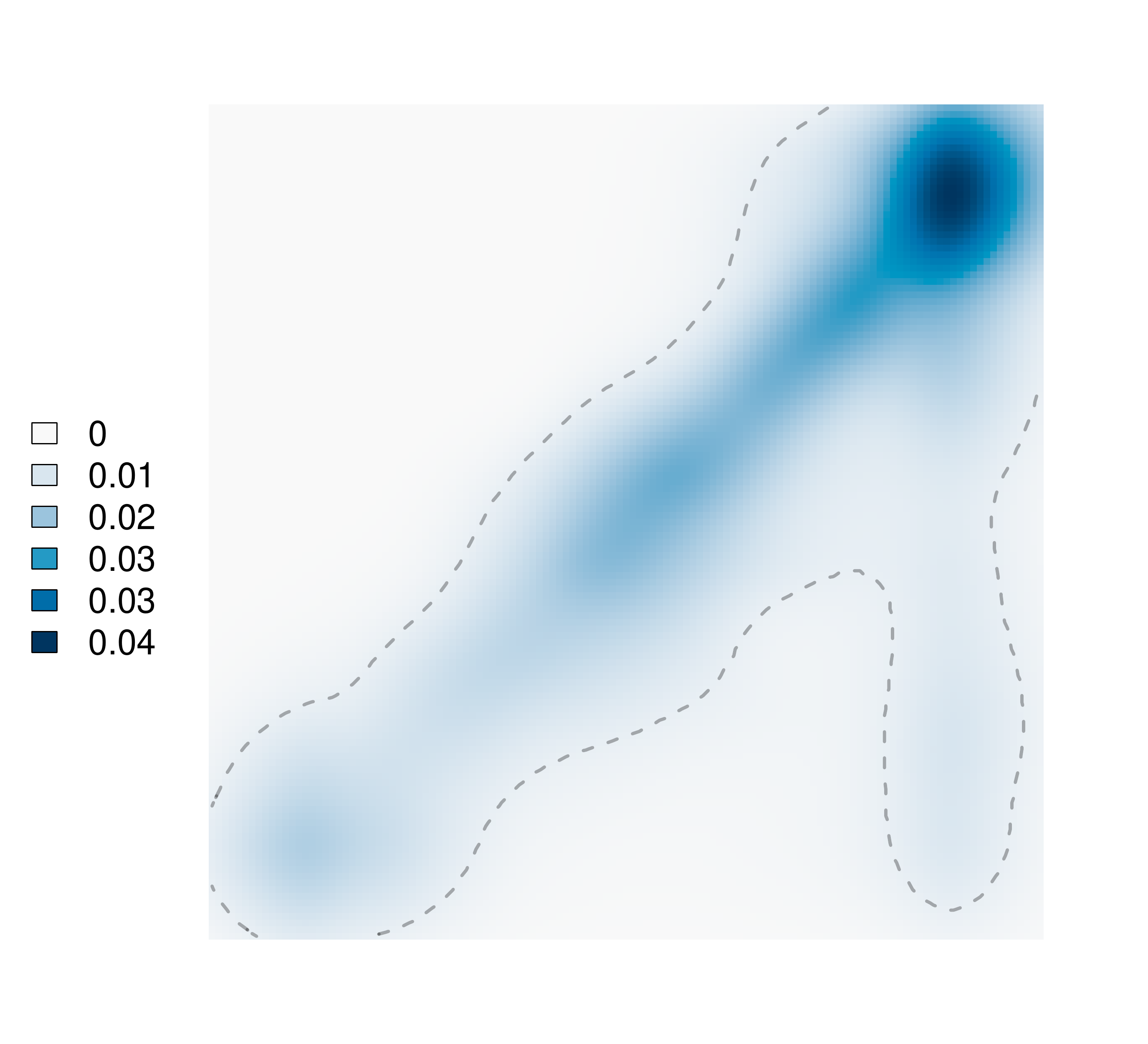}
    \end{minipage}
    \begin{minipage}{.45\textwidth}
    \centering
    \includegraphics[width=.8\textwidth]{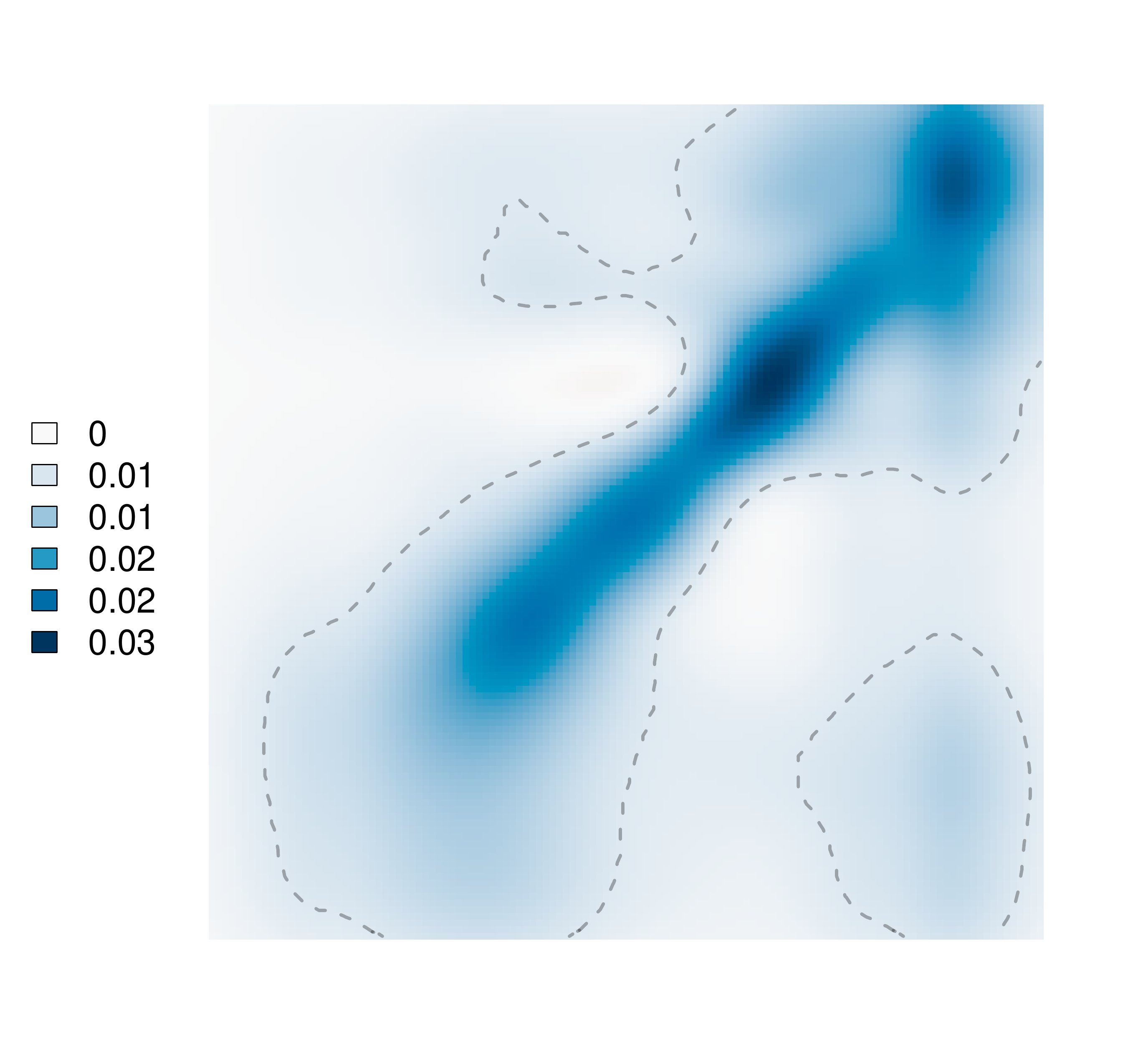}
    \end{minipage}
    \caption{Estimated regression surfaces for demand (left) and supply (right) curves in the Italian natural gas dataset. }
    \label{fig:gasmarket}
\end{figure}


As second illustrative example, we focus on the Italian natural gas data first analyzed by \citet{canale} by means of a concurrent functional time series regression fitted under a Ridge-like penalty. Data consists on 375 pairs of demand and supply curves related to the Italian gas trading platform in 2012. Details on how the curves are constructed from the original bids can be found in \citet{canale}. The domain of the demand and supply curves represent the quantity and the price of the natural gas, respectively. The demand (resp. supply) curves are subject to equality and inequality constraints at the two edges of the domain and are monotonically increasing (resp. decreasing).  Consistently with the so-called transform/back-transform method \citep{FDA1} we map the actual monotone and bounded curves using the $\mbox{logH}$ transformation introduced by \citet{canale} and fit a LSFR autoregressive model to the transformed data similarly to the previous section. Specifically, let $z_i$ be the demand (resp. supply) curve at day $i$. We let
\[
y_i(s) = 1 - \exp\left\{-\int_0^s\exp\{z_i(u)\} d u \right\}, 
\]
and then model transformed demand (resp. supply) curve $y_i$ at day $i$ with a first-order autoregressive linear model as done in Section \ref{sec:swedish}. We set $d=4$ and $M = L = 20$ basis functions and select $\lambda$ via  cross-validation, in the previous application. Two models for the demand and supply series are fitted separately. 

Figure \ref{fig:gasmarket} reports the contour plots of the estimated kernels. Similarly to Section \ref{sec:swedish}, both estimates present  marked positive diagonals while the influence of the curve at time $i-1$ on the curve at time $i$  vanishes if we move far away from the diagonal. The qualitative interpretation is clear: the price of a given quantity for  day $i$ mainly depends on the price of the $i-1$ demand curve in a neighborhood of the same quantity. 

Also in this practical case we stress the usefulness of the  proposed approach and its ability to lie between the concurrent and nonconcurrent approaches. 
\section{Discussion}
\label{sec:discussion}

	Motivated by the current gap between the concurrent and nonconcurrent models, we introduced a model for function-on-function regression which allows for local sparsity patterns, exploiting some of the properties of B-spline basis and a specifically tailored overlap group-Lasso penalization. 
	The proposed MM algorithm directly tackles the minimization arising from such a group-Lasso penalty, but it is also amenable to generalizations beyond the specific group structure motivated by our model and, furthermore, beyond the FDA context. 
	The empirical assessment through simulation and applications to real datasets, provides evidence of improved estimation accuracy with respect to standard competitors and straightforward and effortless interpretability of the results. The latter is of substantial interest when dealing with infinite dimensional objects as in FDA. 
	
	Despite we focused on the case of a single functional covariate, the extension of our modelling framework to $p$ covariates is straightforward. At the same time, our approach can be specified also in the simpler case when the $y_i$'s are scalar responses.

\bibliographystyle{rss}
\bibliography{biblio}

\begin{thebibliography}{44}
\expandafter\ifx\csname natexlab\endcsname\relax\def\natexlab#1{#1}\fi
\expandafter\ifx\csname url\endcsname\relax
  \def\url#1{\texttt{#1}}\fi
\expandafter\ifx\csname urlprefix\endcsname\relax\def\urlprefix{URL: }\fi

\bibitem[{Bach et~al.(2012)Bach, Jenatton, Mairal and
  Obozinski}]{bach_etal.2012}
Bach, F., Jenatton, R., Mairal, J. and Obozinski, G. (2012) Optimization with
  sparsity-inducing penalties.
\newblock \textit{Foundations and Trends® in Machine Learning}, \textbf{4},
  1--106.

\bibitem[{Canale and Vantini(2016)}]{canale}
Canale, A. and Vantini, S. (2016) Constrained functional time series:
  Applications to the italian gas market.
\newblock \textit{International Journal of Forecasting}, \textbf{32},
  1340--1351.

\bibitem[{Candes et~al.(2007)Candes, Tao et~al.}]{dantzig}
Candes, E., Tao, T. et~al. (2007) The dantzig selector: Statistical estimation
  when p is much larger than n.
\newblock \textit{The annals of Statistics}, \textbf{35}, 2313--2351.

\bibitem[{Centofanti et~al.(2020)Centofanti, Fontana, Lepore and
  Vantini}]{vantinieamici}
Centofanti, F., Fontana, M., Lepore, A. and Vantini, S. (2020) Smooth lasso
  estimator for the function-on-function linear regression model.
\newblock \textit{arXiv}, 1--31.

\bibitem[{Chiou and Muller(2009)}]{chiou_muller}
Chiou, J.-M. and Muller, H.-G. (2009) Modeling hazard rates as functional data
  for the analysis of cohort lifetables and mortality forecasting.
\newblock \textit{Journal of the American Statistical Association},
  \textbf{104}, 572--585.

\bibitem[{De~Boor(1978)}]{deboor}
De~Boor, C. (1978) \textit{A practical guide to splines}, vol.~27.
\newblock Springer-Verlag New York.

\bibitem[{Efron et~al.(2004)Efron, Hastie, Johnstone, Tibshirani
  et~al.}]{efron2004least}
Efron, B., Hastie, T., Johnstone, I., Tibshirani, R. et~al. (2004) Least angle
  regression.
\newblock \textit{The Annals of statistics}, \textbf{32}, 407--499.

\bibitem[{Fan and Li(2001)}]{fan2001variable}
Fan, J. and Li, R. (2001) Variable selection via nonconcave penalized
  likelihood and its oracle properties.
\newblock \textit{Journal of the American statistical Association},
  \textbf{96}, 1348--1360.

\bibitem[{Friedman et~al.(2010)Friedman, Hastie and Tibshirani}]{glmnet}
Friedman, J.~H., Hastie, T. and Tibshirani, R. (2010) Regularization paths for
  generalized linear models via coordinate descent.
\newblock \textit{Journal of Statistical Software}, \textbf{33}, 1--22.

\bibitem[{Golub and Uhlig(2009)}]{golub_uhlig.2009}
Golub, G. and Uhlig, F. (2009) The {$QR$} algorithm: 50 years later its genesis
  by {J}ohn {F}rancis and {V}era {K}ublanovskaya and subsequent developments.
\newblock \textit{IMA J. Numer. Anal.}, \textbf{29}, 467--485.

\bibitem[{Golub and Van~Loan(2013)}]{golub_van_loan.2013}
Golub, G.~H. and Van~Loan, C.~F. (2013) \textit{Matrix computations}.
\newblock Johns Hopkins Studies in the Mathematical Sciences. Johns Hopkins
  University Press, Baltimore, MD, fourth edn.

\bibitem[{Hastie et~al.(2009)Hastie, Tibshirani and Friedman}]{hastierr}
Hastie, T., Tibshirani, R. and Friedman, J. (2009) \textit{The Elements of
  Statistical Learning: Data Mining, Inference, and Prediction, Second
  Edition}.
\newblock Springer Series in Statistics. Springer New York.

\bibitem[{Horv{\'a}th and Kokoszka(2012)}]{IFDA}
Horv{\'a}th, L. and Kokoszka, P. (2012) \textit{Inference for Functional Data
  with Applications}.
\newblock Springer Series in Statistics. Springer New York.

\bibitem[{Hsing and Eubank(2015)}]{hsing2015theoretical}
Hsing, T. and Eubank, R. (2015) \textit{Theoretical foundations of functional
  data analysis, with an introduction to linear operators}.
\newblock John Wiley \& Sons.

\bibitem[{Huang et~al.(2012)Huang, Breheny and Ma}]{huang_etal.2012}
Huang, J., Breheny, P. and Ma, S. (2012) A selective review of group selection
  in high-dimensional models.
\newblock \textit{Statist. Sci.}, \textbf{27}, 481--499.

\bibitem[{Hunter and Lange(2000)}]{hunter_lange.2000}
Hunter, D.~R. and Lange, K. (2000) Quantile regression via an {MM} algorithm.
\newblock \textit{J. Comput. Graph. Statist.}, \textbf{9}, 60--77.

\bibitem[{Hunter and Lange(2004)}]{hunter_lange.2004}
--- (2004) A tutorial on {MM} algorithms.
\newblock \textit{Amer. Statist.}, \textbf{58}, 30--37.

\bibitem[{Jacob et~al.(2009)Jacob, Obozinski and Vert}]{jacob}
Jacob, L., Obozinski, G. and Vert, J.-P. (2009) Group lasso with overlap and
  graph lasso.
\newblock In \textit{Proceedings of the 26th annual international conference on
  machine learning}, 433--440.

\bibitem[{James et~al.(2009)James, Wang, Zhu
  et~al.}]{james:etal:2009:functional}
James, G.~M., Wang, J., Zhu, J. et~al. (2009) Functional linear regression
  that's interpretable.
\newblock \textit{The Annals of Statistics}, \textbf{37}, 2083--2108.

\bibitem[{Jenatton et~al.(2011)Jenatton, Audibert and
  Bach}]{jenatton2011structured}
Jenatton, R., Audibert, J.-Y. and Bach, F. (2011) Structured variable selection
  with sparsity-inducing norms.
\newblock \textit{The Journal of Machine Learning Research}, \textbf{12},
  2777--2824.

\bibitem[{Lange(2010)}]{lange.2010}
Lange, K. (2010) \textit{Numerical analysis for statisticians}.
\newblock Springer, New York.

\bibitem[{Lange(2016)}]{lange.2016}
--- (2016) \textit{M{M} optimization algorithms}.
\newblock Society for Industrial and Applied Mathematics, Philadelphia, PA.

\bibitem[{Lee and Park(2012)}]{lee:park:2012:sparse}
Lee, E.~R. and Park, B.~U. (2012) Sparse estimation in functional linear
  regression.
\newblock \textit{Journal of Multivariate Analysis}, \textbf{105}, 1--17.

\bibitem[{Lim and Hastie(2015)}]{lim:hastie}
Lim, M. and Hastie, T. (2015) Learning interactions via hierarchical
  group-lasso regularization.
\newblock \textit{Journal of Computational and Graphical Statistics},
  \textbf{24}, 627--654.

\bibitem[{Lin et~al.(2017)Lin, Cao, Wang and Wang}]{lin2017locally}
Lin, Z., Cao, J., Wang, L. and Wang, H. (2017) Locally sparse estimator for
  functional linear regression models.
\newblock \textit{Journal of Computational and Graphical Statistics},
  \textbf{26}, 306--318.

\bibitem[{Malfait and Ramsay(2003)}]{malfait:2003:historical}
Malfait, N. and Ramsay, J.~O. (2003) The historical functional linear model.
\newblock \textit{Canadian Journal of Statistics}, \textbf{31}, 115--128.

\bibitem[{McLachlan and Krishnan(2008)}]{mclachlan_etal.2008}
McLachlan, G.~J. and Krishnan, T. (2008) \textit{The {EM} algorithm and
  extensions}.
\newblock Wiley Series in Probability and Statistics. Wiley-Interscience, John
  Wiley \& Sons, Hoboken, NJ, second edn.

\bibitem[{Morris(2015)}]{morris:2015:functionalregression}
Morris, J.~S. (2015) Functional regression.
\newblock \textit{Annual Review of Statistics and Its Application}, \textbf{2},
  321--359.

\bibitem[{Neal and Hinton(1999)}]{neal_hinton.1999}
Neal, R.~M. and Hinton, G.~E. (1999) \textit{A View of the EM Algorithm That
  Justifies Incremental, Sparse, and Other Variants}, 355–368.
\newblock Cambridge, MA, USA: MIT Press.

\bibitem[{Ortega and Rheinboldt(1970)}]{ortega_rheinboldt.1970}
Ortega, J.~M. and Rheinboldt, W.~C. (1970) \textit{Iterative Solution of
  Nonlinear Equations in Several Variables}.
\newblock USA: Society for Industrial and Applied Mathematics.

\bibitem[{Polak(1987)}]{polak.1987}
Polak, E. (1987) On the mathematical foundations of nondifferentiable
  optimization in engineering design.
\newblock \textit{SIAM Rev.}, \textbf{29}, 21--89.

\bibitem[{Qin et~al.(2013)Qin, Scheinberg and Goldfarb}]{qin_etal.2013}
Qin, Z., Scheinberg, K. and Goldfarb, D. (2013) Efficient block-coordinate
  descent algorithms for the group {L}asso.
\newblock \textit{Math. Program. Comput.}, \textbf{5}, 143--169.

\bibitem[{Ramsay et~al.(2009)Ramsay, Hooker and Graves}]{FDAR}
Ramsay, J.~O., Hooker, G. and Graves, S. (2009) \textit{Functional data
  analysis with R and Matlab}.
\newblock New York: Springer.

\bibitem[{Ramsay and Silverman(2005)}]{FDA1}
Ramsay, J.~O. and Silverman, B.~W. (2005) \textit{Functional data analysis}.
\newblock New York: Springer.

\bibitem[{Tibshirani(1996)}]{tibshirani1996}
Tibshirani, R. (1996) Regression shrinkage and selection via the lasso.
\newblock \textit{Journal of the Royal Statistical Society: Series B
  (Methodological)}, \textbf{58}, 267--288.

\bibitem[{Trefethen and Bau(1997)}]{trefethen_bau.1997}
Trefethen, L.~N. and Bau, III, D. (1997) \textit{Numerical linear algebra}.
\newblock Society for Industrial and Applied Mathematics (SIAM), Philadelphia,
  PA.

\bibitem[{Wainwright and Jordan(2008)}]{wainwright_jordan.2008}
Wainwright, M.~J. and Jordan, M.~I. (2008) Graphical models, exponential
  families, and variational inference.
\newblock \textit{Foundations and Trends® in Machine Learning}, \textbf{1},
  1--305.

\bibitem[{Watkins(2011)}]{watkins.2011}
Watkins, D.~S. (2011) Francis's algorithm.
\newblock \textit{Amer. Math. Monthly}, \textbf{118}, 387--403.

\bibitem[{Wu and Lange(2008)}]{wu_etal.2008}
Wu, T.~T. and Lange, K. (2008) Coordinate descent algorithms for lasso
  penalized regression.
\newblock \textit{Ann. Appl. Stat.}, \textbf{2}, 224--244.

\bibitem[{Wu and Lange(2010)}]{wu_lange.2010}
--- (2010) The {MM} alternative to {EM}.
\newblock \textit{Statist. Sci.}, \textbf{25}, 492--505.

\bibitem[{Yang and Zou(2015)}]{yang_zou.2015}
Yang, Y. and Zou, H. (2015) A fast unified algorithm for solving group-lasso
  penalize learning problems.
\newblock \textit{Stat. Comput.}, \textbf{25}, 1129--1141.

\bibitem[{Yuan and Lin(2006)}]{yuan:2006:glasso}
Yuan, M. and Lin, Y. (2006) Model selection and estimation in regression with
  grouped variables.
\newblock \textit{Journal of the Royal Statistical Society: Series B
  (Statistical Methodology)}, \textbf{68}, 49--67.

\bibitem[{Zhao et~al.(2009)Zhao, Rocha, Yu et~al.}]{zhao2009composite}
Zhao, P., Rocha, G., Yu, B. et~al. (2009) The composite absolute penalties
  family for grouped and hierarchical variable selection.
\newblock \textit{The Annals of Statistics}, \textbf{37}, 3468--3497.

\bibitem[{Zhou et~al.(2013)Zhou, Wang and Wang}]{zhou2013functional}
Zhou, J., Wang, N.-Y. and Wang, N. (2013) Functional linear model with
  zero-value coefficient function at sub-regions.
\newblock \textit{Statistica Sinica}, \textbf{23}, 25.

\end{thebibliography}

\newpage
\appendix
\section*{Appendix}

\subsection*{Examples of group-Lasso not allowed sparsity patterns}

\begin{figure}[H]
    \centering
    \includegraphics[width=.77\textwidth]{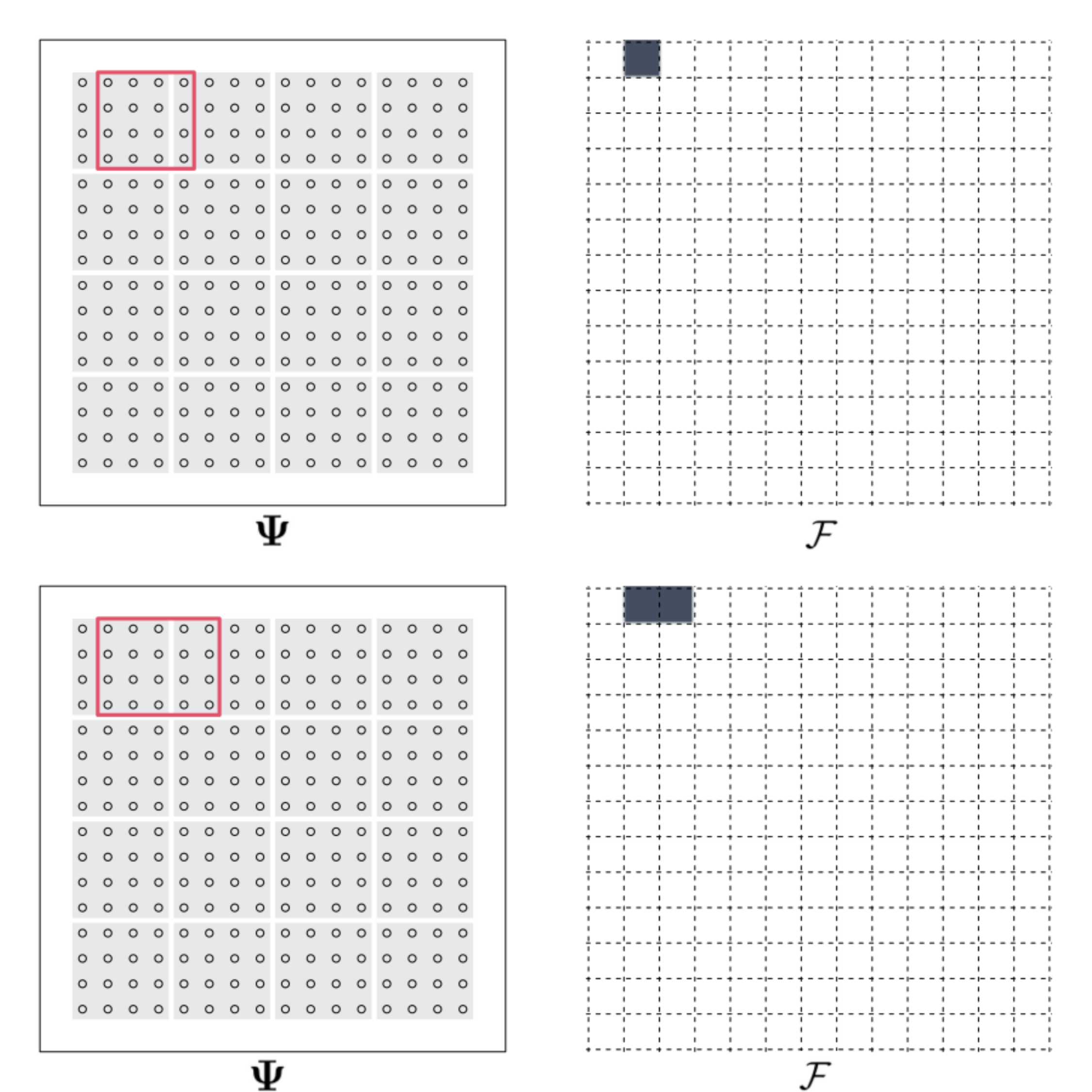}
    \caption{
    On the left, the matrices of coefficients $\Psib$  partitioned  in $d \times d$ ($d=4$) blocks (grey background) coherent with a group-Lasso approach. 
    On the right, the set $\mathcal{F}$ partitioned by the B-splines knots (dashed lines). Each square corresponds to a set $\mathcal{F}_{m,l}$. 
    In blue we depict two sparsity patterns ${\cal F}_0$ that are not allowed under the group-Lasso definition, consistent with Section 2.2. Specifically in the upper plot $\psi(t,s)=0$ for $(t,s) \in \mathcal{F}_{1,2}$ and no other neighboring sets. In the lower plot $\psi(t,s)=0$ for $(t,s) \in \mathcal{F}_{1,2} \cup \mathcal{F}_{1,3}$. To get these sparsity patterns, all the coefficient inside the red contours in the matrix  $\Psib$ (left plot) need to be set to zero. The red tiles are not consistent with the disjoint grouping (in grey). Notably, the overlapping group structure that we adopt allows for these situations. 
}
    \label{fig:esempi}
\end{figure}

\subsection*{Additional computational results}
In this Section we provide additional results concerning the MM algorithm introduced in Section \ref{sec:comp}. We first introduce the MM algorithm \ref{alg:mmalgo}, then we provide the exact computational costs and we conclude the section by considering the problem of finding the minimum value of $\lambda$ for which all the coefficients are equal to zero.
%
{\begin{center}
\begin{minipage}{.9\linewidth}
\begin{algorithm*}[H]
\label{alg:mmalgo}
\begingroup
    \fontsize{10pt}{11pt}\selectfont
\caption{MM algorithm for penalized functional regression} 
 \label{alg:coord_desc_mm_qr}
 {Set $k=0$, initialise the parameters $\widehat{\boldsymbol{\psi}}^{(0)}$ and evaluate $\ell\Big(\widehat{\boldsymbol{\psi}}^{(0)} \Big)$ as in equation \eqref{eq:min3_init}, compute $\mathbf{Z}^T\mathbf{y}$, $\mathbf{y}^T\mathbf{y}$ and $\mathbf{D}_b^2=\mathbf{D}_b^T\mathbf{D}_b$ for $b=1,2,\dots,B+1$, and set $\ell\Big(\widehat{\boldsymbol{\psi}}^{(-1)}\Big)=0$}\;   
 \While(){$\Big\vert\ell\Big(\widehat{\boldsymbol{\psi}}^{(k)} \Big)-\ell\Big(\widehat{\boldsymbol{\psi}}^{(k-1)} \Big)\Big\vert>\epsilon$}
 { 
{update $\widehat{\boldsymbol{\psi}}^{(k+1)}$}, as in equation \eqref{eq:minimization_surrogate_fun_sol}\;
 {update  $\widehat{d}_b^{(k)}$ as in equation \eqref{eq:mm_algo_update_constants} and compute $\mathbf{H}^{(k)}=\Big(\sum_{b=1}^{B+1} \widehat{d}_b^{(k)}\mathbf{D}_b^2\Big)^{1/2}$}\;
{evaluate $\ell\Big(\widehat{\boldsymbol{\psi}}^{(k+1)} \Big)$ as in equation \eqref{eq:min3_init}}\;
{set $k=k+1$}\;
}
\endgroup
\end{algorithm*}
\end{minipage}
\end{center}
}
%
\begin{proposition}
\label{prop:cost}
The exact computational cost in terms of floating point operations \citep[see, e.g.][]{golub_van_loan.2013} of performing Algorithm \ref{alg:coord_desc_mm_qr} is $13(nG)^3+LM\big[2(nG)^2+3nG+7B+15\big]+(LM)^2+7nG+6$ which is of order $\mathcal{O}\big((nG)^3\big)$.
\end{proposition}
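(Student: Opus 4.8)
The plan is to establish the claim by a direct, line-by-line accounting of the floating-point operations required by Algorithm~\ref{alg:coord_desc_mm_qr}, using the standard elementary counts collected in \citet{golub_van_loan.2013}: a product of an $a\times b$ by a $b\times c$ matrix costs $2abc$ flops, a matrix--vector product of an $a\times b$ matrix by a $b$-vector costs $2ab$ flops, scaling a matrix or vector by a diagonal matrix costs one flop per structurally nonzero entry, and the symmetric eigenvalue decomposition of an $n\times n$ matrix has a cubic cost with a fixed leading constant. I would partition the work into the one-time initialization line and the four updates inside the \texttt{while} loop, and crucially exploit the Sherman--Morrison--Woodbury reformulation so that the only operation that is cubic in $nG$ is the spectral decomposition of the $nG\times nG$ matrix $\mathbf{J}^{(k)}=\mathbf{Z}(\mathbf{H}^{(k)})^{-1}\mathbf{Z}^{T}$, rather than the naive inversion of the $LM\times LM$ matrix $\mathbf{A}_\lambda^{(k)}$.

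First I would isolate the leading term. Because $\mathbf{H}^{(k)}$ is diagonal, forming $\mathbf{J}^{(k)}=\mathbf{Z}(\mathbf{H}^{(k)})^{-1}\mathbf{Z}^{T}$ reduces to a diagonal rescaling of the columns of $\mathbf{Z}$ followed by one matrix product, which contributes the $2(nG)^{2}LM$ term. The symmetric spectral decomposition $\mathbf{U}^{(k)}\boldsymbol{\Lambda}^{(k)}(\mathbf{U}^{(k)})^{T}=\mathbf{J}^{(k)}$, together with the reconstruction of $\mathbf{B}^{(k)}_\lambda=\lambda\mathbf{U}^{(k)}(\lambda\mathbf{I}_{nG}+\boldsymbol{\Lambda}^{(k)})^{-1}(\mathbf{U}^{(k)})^{T}$ from its factors, accounts for the $13(nG)^{3}$ term and supplies the overall order $\mathcal{O}((nG)^{3})$.

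Next I would account for the lower-order work. Assembling $\widehat{\boldsymbol{\psi}}^{(k+1)}$ in \eqref{eq:minimization_surrogate_fun_sol} by applying the Sherman--Morrison--Woodbury expression to the precomputed vector $\mathbf{Z}^{T}\mathbf{y}$ amounts to a fixed sequence of matrix--vector products with $\mathbf{Z}$ and $\mathbf{Z}^{T}$, an application of $\mathbf{B}^{(k)}_\lambda$, and diagonal scalings by $(\mathbf{H}^{(k)})^{-1}$ and by $\lambda^{-1},\lambda^{-2}$; collecting these yields the $3nG\,LM$, $(LM)^{2}$ and $7nG$ contributions. Computing the weights $\widehat{d}_b^{(k)}$ in \eqref{eq:mm_algo_update_constants} and assembling the diagonal matrix $\mathbf{H}^{(k)}=(\sum_{b=1}^{B+1}\widehat{d}_b^{(k)}\mathbf{D}_b^{2})^{1/2}$ over the $B+1$ groups---each touching at most the $LM$ diagonal entries with a bounded number of operations per group---produces the $7B\,LM$ and $15\,LM$ terms, while evaluating $\ell(\widehat{\boldsymbol{\psi}}^{(k+1)})$ as in \eqref{eq:min3_init} and the remaining scalar bookkeeping give the additive constant $6$. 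Summing all contributions and collecting by degree recovers the stated polynomial.

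The main obstacle is not conceptual but one of exact bookkeeping. The delicate points are: (i) pinning down the leading constant $13$ of $(nG)^{3}$, which requires committing to a specific symmetric eigensolver count and then correctly combining it with the $\mathcal{O}((nG)^{3})$ cost of re-forming $\mathbf{B}^{(k)}_\lambda$ from $\mathbf{U}^{(k)}$ and $\boldsymbol{\Lambda}^{(k)}$; (ii) deciding, without double counting, which quantities ($\mathbf{Z}^{T}\mathbf{y}$, $\mathbf{y}^{T}\mathbf{y}$, and the squared diagonals $\mathbf{D}_b^{2}$) are formed once in the initialization versus recomputed at each sweep; and (iii) keeping the low-order terms $3nG\,LM$, $(LM)^{2}$, $7B\,LM$, $15\,LM$, $7nG$ and $6$ exact rather than merely up to order, since the statement asks for the precise flop count and not only the $\mathcal{O}((nG)^{3})$ rate. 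Once these are resolved, the order claim follows immediately because, under the regime $nG>LM$, the term $13(nG)^{3}$ dominates every other contribution.
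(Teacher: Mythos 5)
Your overall strategy is the same as the paper's: a direct, step-by-step accounting of floating-point operations for the initialization line and each line of the \texttt{while} loop, with the $13(nG)^3$ leading term coming from the symmetric spectral decomposition of the $nG\times nG$ matrix $\mathbf{J}^{(k)}$ (the paper cites Francis' algorithm) and everything else contributing lower-order polynomial terms. Your justification of the order claim $\mathcal{O}\big((nG)^3\big)$ is therefore sound. However, the statement asserts an \emph{exact} flop count, and for such a statement the bookkeeping is not a technicality to be deferred --- it is the entire content of the proof. You explicitly list the delicate points (the constant $13$, the initialization-versus-iteration split, the exactness of the low-order terms) and then write ``once these are resolved\ldots'' without resolving them; you never produce the per-step counts (e.g., $(LM+1)(2nG-1)$ for $\mathbf{Z}^T\mathbf{y}$, $(2nG-1)nGLM+(LM)^2+3LM+2$ for the $\widehat{\boldsymbol{\psi}}^{(k+1)}$ update, $LM(B+1)+1$ for $\mathbf{H}^{(k)}$, and so on) whose sum actually yields $13(nG)^3+LM\big[2(nG)^2+3nG+7B+15\big]+(LM)^2+7nG+6$. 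As written, the exact polynomial is asserted by reverse-engineering which monomial ``should'' come from which step rather than derived.

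One concrete place where the deferred bookkeeping matters: you attribute the $2(nG)^2LM$ term to forming $\mathbf{J}^{(k)}=\mathbf{Z}(\mathbf{H}^{(k)})^{-1}\mathbf{Z}^T$, whereas the paper obtains it from the multiplication by $\mathbf{Z}$ inside the Sherman--Morrison--Woodbury evaluation of $\widehat{\boldsymbol{\psi}}^{(k+1)}$ (the $(2nG-1)nGLM$ item) and does not separately charge for assembling $\mathbf{J}^{(k)}$. Both operations individually cost about $2(nG)^2LM$ flops, so an accounting that charges for both would give a coefficient of $4$ rather than $2$ on the $(nG)^2LM$ term, while one that charges for neither would give $0$. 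To reproduce the stated polynomial you must commit to a specific implementation and charge each $\mathcal{O}\big((nG)^2LM\big)$ product exactly once; your sketch does not pin this down, so it cannot yet be checked against the claimed formula.
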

\begin{proof}
The exact computational cost is
\begin{itemize}
    \item[-] $(LM+1)(2nG-1)$ to compute $\mathbf{Z}^T\mathbf{y}$ and $2LM+nG(2LM-1)+2nG+3LM(B+1)+2$ to compute $\ell(\boldsymbol{\psi}^{(-1)})$ and $2LM-1$ to compute $\mathbf{D}_b^2$,  in the initialization step, totalling $4nGLM+6LM+3BLM+2nG$;
    \item[-] $3LM+2$ to compute $\widehat{d}_n^{(k)}$ and $LM(B+1)+1$ to compute $\mathbf{H}^{(k)}$, totalling $(B+4)LM+3$;
    \item[-] $13(nG)^3$ to compute the spectral decomposition of the matrix $\mathbf{J}^{(k)}$ leveraging the Francis' algorithm \citep[][]{watkins.2011,golub_uhlig.2009},  \citep[see][]{trefethen_bau.1997};
    \item[-] $4nG+(LM)^2$ to compute $(\mathbf{B}^{(k)}_{\lambda})^{1/2}$;
    \item[-] $(2nG-1)nGLM+(LM)^2+3LM+2$ to compute $\widehat{\boldsymbol{\psi}}^{(k+1)}$, totalling $(2(nG)^2+nG+3)LM+(LM)^2+3$;
    \item[-] $2LM+nG(2LM-1)+2nG+3LM(B+1)+2$ to compute $\ell(\boldsymbol{\psi}^{(k+1)})$, totalling $(2+2nG)LM+nG+3LM(B+1)$,
\end{itemize}
totalling $13(nG)^3+LM\big[2(nG)^2+3nG+7B+15\big]+(LM)^2+7nG+6$.
\end{proof}
It is worth noting that the leading element of the computational cost is cubic in $nG$ which is assumed to be less than $LM$. Moreover, leveraging the Shermann-Morrison-Woodbury formula has the further benefit that, as the MM algorithm iterates, the columns of the design matrix $\mathbf{Z}$ that correspond to groups of coefficients that have been previously set to zero are discarded.\par
The next proposition provides the minimum value of the tuning parameter $\lambda$ for which all the coefficients in the minimization problem in Equation \eqref{eq:min3_init} are equal to zero.
\begin{proposition} 
\label{prop_ovgLasso_lambda_min}
\label{prop:smallestlambda}
For the overlap group-Lasso problem in equation \eqref{eq:min3_init} the smallest $\lambda$ at which all coefficients $\widehat{\boldsymbol{\psi}}_{\lambda_{\mathrm{max}}}$ are equal to zero is
\begin{equation}
\label{eq:ovglasso_lambdamax}
\widehat{\lambda}_{\max}=\max_{j}\Bigg\{\frac{\vert \mathbf{Z}_{j}^T\mathbf{y}\vert}{\widetilde{\boldsymbol{D}}_{jj}}\Bigg\},
\end{equation}
where $\mathbf{Z}_{j}$ denotes the $j$-th column vector of the matrix $\mathbf{Z}$ and $\widetilde{\boldsymbol{D}}_{jj}$ denotes the $j$-th diagonal element of the matrix $\widetilde{\mathbf{D}}=\sum_{b=1}^{B+1} \mathbf{D}^2_{b}$.
\end{proposition}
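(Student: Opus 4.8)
The plan is to characterise exactly when $\widehat{\boldsymbol{\psi}}_\lambda=\mathbf{0}$ by means of the first-order (subgradient) optimality condition of the convex program \eqref{eq:min3_init}, and then read off the threshold in $\lambda$. Since $\ell$ is convex and the smooth part $\frac12\Vert\mathbf{y}-\mathbf{Z}\boldsymbol{\psi}\Vert_2^2$ is differentiable, $\boldsymbol{\psi}=\mathbf{0}$ is a global minimiser if and only if $\mathbf{0}\in\partial\ell(\mathbf{0})$, that is
\[
\mathbf{Z}^T\mathbf{y}\in\lambda\,\partial\Omega(\mathbf{0}),\qquad \Omega(\boldsymbol{\psi})=\sum_{b=1}^{B+1}\Vert\mathbf{D}_b\boldsymbol{\psi}\Vert_2 .
\]
The first step is to compute $\partial\Omega(\mathbf{0})$. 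Each summand $\boldsymbol{\psi}\mapsto\Vert\mathbf{D}_b\boldsymbol{\psi}\Vert_2$ is the Euclidean norm precomposed with the symmetric diagonal map $\mathbf{D}_b=\mathrm{diag}(c_b)$, so its subdifferential at $\mathbf{0}$ is $\{\mathbf{D}_b\mathbf{u}_b:\Vert\mathbf{u}_b\Vert_2\le1\}$, and by the Moreau–Rockafellar sum rule (all terms finite and continuous) $\partial\Omega(\mathbf{0})$ is the Minkowski sum of these sets. Hence $\mathbf{0}$ is optimal precisely when there exist $\mathbf{u}_b$ with $\Vert\mathbf{u}_b\Vert_2\le1$ such that $\mathbf{Z}^T\mathbf{y}=\lambda\sum_{b}\mathbf{D}_b\mathbf{u}_b$.

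The second step is to convert this vector feasibility statement into the scalar conditions that yield \eqref{eq:ovglasso_lambdamax}. Because every $\mathbf{D}_b$ is diagonal, the $j$-th equation reads $(\mathbf{Z}^T\mathbf{y})_j=\lambda\sum_{b}c_{b,j}u_{b,j}$, and the aggregate weight carried by coordinate $j$ across all groups is exactly $\widetilde{\mathbf{D}}_{jj}=\sum_b c_{b,j}^2$. I would build the explicit dual certificate that spreads the residual over the groups in proportion to their weights, $u_{b,j}=c_{b,j}(\mathbf{Z}^T\mathbf{y})_j/(\lambda\widetilde{\mathbf{D}}_{jj})$, which satisfies the equality identically, and then aim to show that the binding constraints $\Vert\mathbf{u}_b\Vert_2\le1$ reduce to the per-coordinate inequalities $\vert(\mathbf{Z}^T\mathbf{y})_j\vert\le\lambda\,\widetilde{\mathbf{D}}_{jj}$. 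Solving each for $\lambda$ gives the coordinate threshold $\vert(\mathbf{Z}^T\mathbf{y})_j\vert/\widetilde{\mathbf{D}}_{jj}$; since $\ell$ forces $\widehat{\boldsymbol{\psi}}_\lambda=\mathbf{0}$ whenever $\lambda$ exceeds every threshold and fails to do so below the largest one (monotonicity of the optimal support in $\lambda$), the smallest admissible value is the maximum over $j$, i.e. $\widehat{\lambda}_{\max}$.

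I expect the genuinely delicate part to be precisely this coordinatewise reduction. Because the groups overlap, the multipliers $\mathbf{u}_b$ share coordinates and the norm constraints $\Vert\mathbf{u}_b\Vert_2\le1$ couple the $LM$ scalar equations, so that $\partial\Omega(\mathbf{0})$ is a sum of overlapping ellipsoids rather than a product of intervals; verifying that the proposed aligned certificate is feasible exactly at $\lambda=\widehat{\lambda}_{\max}$, and that no certificate exists below it, is where the work concentrates. The role of the balancing weights $c_{ml}=(\sum_b s^{(b)}_{ml})^{-1}$ and of the all-inclusive group $b=B+1$ is exactly to make each coordinate's total weight $\widetilde{\mathbf{D}}_{jj}$ the correct normalising quantity, so I would check carefully how these two ingredients make the aggregated condition separate coordinatewise. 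Establishing both directions of the equivalence — sufficiency through the explicit certificate and necessity through the dual characterisation of $\partial\Omega(\mathbf{0})$ — is the crux; the surrounding algebra is routine.
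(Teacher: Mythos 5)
Your setup is the same as the paper's --- the subgradient optimality condition $\mathbf{Z}^T\mathbf{y}\in\lambda\,\partial\Omega(\mathbf{0})$ --- and your description of $\partial\Omega(\mathbf{0})$ as a Minkowski sum of the ellipsoids $\{\mathbf{D}_b\mathbf{u}_b:\Vert\mathbf{u}_b\Vert_2\le 1\}$ is in fact more careful than the paper's, which at this point silently replaces that sum of ellipsoids by a coordinatewise box (introducing a dual variable $\widehat z_j\in[-1,1]$ per coordinate, as if the penalty were a weighted $\ell_1$ norm). The difficulty is that the step you yourself flag as delicate --- reducing the coupled constraints $\Vert\mathbf{u}_b\Vert_2\le 1$ to the per-coordinate inequalities $\vert(\mathbf{Z}^T\mathbf{y})_j\vert\le\lambda\widetilde{\mathbf{D}}_{jj}$ --- does not go through in either direction. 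With the paper's balancing weights $c_{b,j}=1/N_j$ (where $N_j$ is the number of groups containing coordinate $j$) one has $\widetilde{\mathbf{D}}_{jj}=\sum_b c_{b,j}^2=1/N_j$, and your aligned certificate $u_{b,j}=c_{b,j}(\mathbf{Z}^T\mathbf{y})_j/(\lambda\widetilde{\mathbf{D}}_{jj})$ yields $\Vert\mathbf{u}_b\Vert_2^2=\sum_{j\in b}(\mathbf{Z}^T\mathbf{y})_j^2/\lambda^2$; for the all-inclusive group $b=B+1$ this is feasible only when $\lambda\ge\Vert\mathbf{Z}^T\mathbf{y}\Vert_2$, which in general strictly exceeds $\max_j\vert(\mathbf{Z}^T\mathbf{y})_j\vert/\widetilde{\mathbf{D}}_{jj}$, so sufficiency of the claimed threshold is not delivered by this certificate. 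Conversely, the coordinatewise bound that does follow from membership in the Minkowski sum is $\vert(\mathbf{Z}^T\mathbf{y})_j\vert\le\lambda\sum_b c_{b,j}$ (using $\vert u_{b,j}\vert\le\Vert\mathbf{u}_b\Vert_2\le 1$), and $\sum_b c_{b,j}=1\ne\widetilde{\mathbf{D}}_{jj}$, so necessity does not follow either. The dual norm of an overlapping sum of weighted Euclidean norms is simply not of the form $\max_j\vert v_j\vert/\widetilde{\mathbf{D}}_{jj}$ --- already a single non-singleton group shows this --- so no choice of certificate will make the coordinatewise equivalence you need hold.

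To be fair, the paper's own argument has exactly the same hole: it obtains the formula only by treating $\partial\mathcal{P}^{\mathrm{OG}}_\lambda(\mathbf{0})$ as the box $\{\mathbf{s}:\vert s_j\vert\le\lambda\widetilde{\mathbf{D}}_{jj}\}$, which is not the subdifferential of the overlap group penalty at the origin. Your more honest geometric setup therefore exposes that the stated $\widehat{\lambda}_{\max}$ is at best a heuristic threshold. But judged as a proof of the proposition as stated, your proposal is incomplete at precisely the step you identified, and that step cannot be repaired without changing either the formula or the penalty; what you can salvage rigorously from your approach are the two-sided bounds $\Vert\mathbf{Z}^T\mathbf{y}\Vert_\infty\le\lambda_{\max}\le\Vert\mathbf{Z}^T\mathbf{y}\Vert_2$, with $\lambda_{\max}$ characterised as the dual norm $\Omega^*(\mathbf{Z}^T\mathbf{y})$, i.e.\ the value of the min--max allocation problem over decompositions $\mathbf{Z}^T\mathbf{y}=\sum_b\mathbf{g}_b$ with $\mathbf{g}_b$ supported on group $b$.
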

\begin{proof}
The objective function of the overlap group-Lasso in equation \eqref{eq:min3_init} can be decomposed into the sum of three terms:
\begin{equation}
\ell(\boldsymbol{\psi})=\frac{1}{2}\Vert \mathbf{y} - \mathbf{Z} \boldsymbol{\psi}\Vert_2^2 + \lambda\sum_{b=1}^{B+1}\sqrt{\boldsymbol{\psi}^T\mathbf{D}_b^2\boldsymbol{\psi}}=\mathrm{RSS}(\boldsymbol{\psi})+\mathcal{P}^{\mathrm{OG}}_\lambda(\boldsymbol{\psi}),
\end{equation}
having subdifferential 
\begin{equation}
\partial\ell(\boldsymbol{\psi})=\Bigg\{-\mathbf{Z}^T(\mathbf{y}-\mathbf{Z}\boldsymbol{\psi})+\mathcal{P}^{\mathrm{OG}}_\lambda(\boldsymbol{\psi})\Bigg\},
\end{equation}
where $\partial\mathcal{P}^{\mathrm{OG}}_\lambda(\boldsymbol{\psi})=\partial\big(\lambda\sum_{b=1}^{B+1}\sqrt{\boldsymbol{\psi}^T\mathbf{D}_b^2\boldsymbol{\psi}}\big)$ is the subdifferential of the OG penalty term
\begin{equation}
\begin{aligned}
\partial\mathcal{P}^{\mathrm{OG}}_\lambda(\boldsymbol{\psi})&=\partial\big(\lambda\sum_{b=1}^{B+1}\sqrt{\boldsymbol{\psi}^T\mathbf{D}_b^2\boldsymbol{\psi}}\big)\\
&=\Big\{
\mathbf{s}\in\mathbb{R}^{LM}\,:\,\mathbf{s}=\sum_{b=1}^{B+1}\frac{\mathbf{D}_b^2\boldsymbol{\psi}}{\sqrt{\boldsymbol{\psi}^T\mathbf{D}^2_b\boldsymbol{\psi}}},\,\mathrm{if}\,\Vert\boldsymbol{\psi}\Vert_2>0,\,\mathrm{and}\,\Vert\boldsymbol{\psi}\Vert_2\leq\lambda\sum_{b=1}^{B+1}\mathbf{D}_b^2,\,\mathrm{otherwise}
\Big\}.
\end{aligned}
\end{equation}
Since the objective function is convex the first order optimality condition $\partial\ell(\boldsymbol{\psi})=0$ and Theorem \ref{th:existence} ensure the existence of a unique $\widehat{\mathbf{s}}\in\partial\mathcal{P}^{\mathrm{OG}}_\lambda(\boldsymbol{\psi})$ fulfilling the condition $-\mathbf{Z}^T(\mathbf{y}-\mathbf{Z}\widehat{\boldsymbol{\psi}})+\widehat{\mathbf{s}}=0$. Plugging $\widehat{\boldsymbol{\psi}}=0$ we get $\mathbf{Z}^T\mathbf{y}=\lambda\widehat{\mathbf{s}}$ from which we obtain the equivalent condition $\big(\sum_{b=1}^{B+1}\mathbf{D}_b^2\big)^{-1}\mathbf{Z}^T\mathbf{y}=\lambda\widehat{\mathbf{z}}$ and $\widehat{\mathbf{z}}$ is the dual variable satisfying $\widehat{z}_j=\mathrm{sign}(\widehat{\psi}_j)$ if $\widehat{\psi}_j\neq 0$ and $\widehat{z}_j\in[-1,1]$ if $\widehat{\psi}_j= 0$ for $j=1,\dots,LM$. Therefore $\Big\Vert\big(\sum_{b=1}^{B+1}\mathbf{D}_b^2\big)^{-1}\mathbf{Z}^T\mathbf{y}\Big\Vert_\infty=\lambda\Vert\widehat{\mathbf{z}}\Vert_\infty$, which completes the proof.
%
\end{proof}

\subsection*{Algorithm convergence properties}
%
\noindent The following propositions state the local and global convergence properties of the MM algorithm. In what follows, let $\ell(\mathbf{\boldsymbol{\psi}})$ be the objective function to be minimized and $\mathcal{Q}(\mathbf{\boldsymbol{\psi}}\vert\mathbf{\widehat{\boldsymbol{\psi}}}^{(k)})$ be the majorizing function at the current $k$-iteration, where $\widehat{\boldsymbol{\psi}}^{(k)}$ is the estimate of the parameter $\boldsymbol{\psi}$ at the $k$-iteration. Moreover, let $\widehat{\boldsymbol{\psi}}^{(k+1)}=M(\widehat{\boldsymbol{\psi}}^{(k)})$ denote the mimimizer of $\mathcal{Q}(\mathbf{\boldsymbol{\psi}}\vert\widehat{\mathbf{\boldsymbol{\psi}}}^{(k)})$ at iteration $k$. Following \cite{lange.2010} we provide conditions for local and global convergence of the MM algorithm for proving the local and global convergence of the MM sequence delivered by the MM algorithm introduced in Section \ref{sec:comp}.\newline

\noindent The following proposition \citep[adapted from][Proposition 15.3.2]{lange.2010} provides the conditions for local convergence of any MM sequence.
\begin{proposition}
Since the majorizing function is strictly convex, if the Hessian of the surrogate function $d^{20}\mathcal{Q}(\mathbf{\boldsymbol{\psi}}^{\infty}\vert\mathbf{\boldsymbol{\psi}}^{\infty})$ is invertible, then the proposed MM algorithm is locally attracted to a local minimum $\mathbf{\boldsymbol{\psi}}^{\infty}$ at a linear rate equal to the spectral radius $\rho(dM(\boldsymbol{\psi}^{\infty}))$ of $dM(\boldsymbol{\psi}^{\infty})=\mathbf{I}_{LM}-\big(d^{20}\mathcal{Q}(\boldsymbol{\psi}^{\infty}\vert\mathbf{\boldsymbol{\psi}}^{\infty})\big)^{-1}d^2\ell(\mathbf{\boldsymbol{\psi}}^{\infty})$, where $\mathbf{I}_{LM}$ denotes the identity matrix of dimension $LM$.
\label{prop:spectralradius}
\end{proposition}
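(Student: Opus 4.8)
The plan is to treat the MM update as a smooth fixed-point map $M$ and invoke the classical local-convergence theory for such iterations, specializing the general argument of \citet{lange.2010} to our surrogate \eqref{eq:minimization_surrogate_fun_2}. First I would characterize $M$ implicitly: since $\widehat{\boldsymbol{\psi}}^{(k+1)} = M(\widehat{\boldsymbol{\psi}}^{(k)})$ is the unique minimizer of the strictly convex surrogate $\mathcal{Q}(\cdot \mid \widehat{\boldsymbol{\psi}}^{(k)})$, it is the unique solution of the stationarity equation $d^{10}\mathcal{Q}(M(\boldsymbol{\psi}) \mid \boldsymbol{\psi}) = 0$, where $d^{10}$ denotes differentiation in the first argument. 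Because $\boldsymbol{\psi}^{\infty}$ is a fixed point, $M(\boldsymbol{\psi}^{\infty}) = \boldsymbol{\psi}^{\infty}$, and the hypothesized invertibility of $d^{20}\mathcal{Q}(\boldsymbol{\psi}^{\infty}\mid\boldsymbol{\psi}^{\infty})$ lets the implicit function theorem guarantee that $M$ is continuously differentiable in a neighborhood of $\boldsymbol{\psi}^{\infty}$.

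Next I would compute $dM(\boldsymbol{\psi}^{\infty})$ by implicit differentiation of the stationarity equation. Differentiating $d^{10}\mathcal{Q}(M(\boldsymbol{\psi})\mid\boldsymbol{\psi}) = 0$ with respect to $\boldsymbol{\psi}$ and evaluating at the fixed point yields $d^{20}\mathcal{Q}(\boldsymbol{\psi}^{\infty}\mid\boldsymbol{\psi}^{\infty})\, dM(\boldsymbol{\psi}^{\infty}) + d^{11}\mathcal{Q}(\boldsymbol{\psi}^{\infty}\mid\boldsymbol{\psi}^{\infty}) = 0$, so $dM(\boldsymbol{\psi}^{\infty}) = -\big(d^{20}\mathcal{Q}(\boldsymbol{\psi}^{\infty}\mid\boldsymbol{\psi}^{\infty})\big)^{-1} d^{11}\mathcal{Q}(\boldsymbol{\psi}^{\infty}\mid\boldsymbol{\psi}^{\infty})$. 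The crucial step is then to eliminate the mixed partial $d^{11}\mathcal{Q}$ using the tangency of the majorizer. By construction of the dominating-hyperplane surrogate, $\mathcal{Q}(\boldsymbol{\theta}\mid\boldsymbol{\theta}) = \ell(\boldsymbol{\theta})$ and $d^{10}\mathcal{Q}(\boldsymbol{\theta}\mid\boldsymbol{\theta}) = d\ell(\boldsymbol{\theta})$ for every $\boldsymbol{\theta}$; differentiating the latter identity in $\boldsymbol{\theta}$ (which contributes through both argument slots) gives $d^{20}\mathcal{Q}(\boldsymbol{\theta}\mid\boldsymbol{\theta}) + d^{11}\mathcal{Q}(\boldsymbol{\theta}\mid\boldsymbol{\theta}) = d^2\ell(\boldsymbol{\theta})$. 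Substituting $d^{11}\mathcal{Q} = d^2\ell - d^{20}\mathcal{Q}$ collapses the expression to the advertised form $dM(\boldsymbol{\psi}^{\infty}) = \mathbf{I}_{LM} - \big(d^{20}\mathcal{Q}(\boldsymbol{\psi}^{\infty}\mid\boldsymbol{\psi}^{\infty})\big)^{-1} d^2\ell(\boldsymbol{\psi}^{\infty})$.

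Finally, I would conclude local linear convergence via the standard fixed-point (Ostrowski) theorem: if $M$ is differentiable at its fixed point and $\rho\big(dM(\boldsymbol{\psi}^{\infty})\big) < 1$, then $\widehat{\boldsymbol{\psi}}^{(k)} \to \boldsymbol{\psi}^{\infty}$ for all starts in a neighborhood, at a linear rate governed by $\rho\big(dM(\boldsymbol{\psi}^{\infty})\big)$. To secure the spectral bound I would use the majorization inequality at the Hessian level: since $g(\boldsymbol{\psi}) := \mathcal{Q}(\boldsymbol{\psi}\mid\boldsymbol{\psi}^{\infty}) - \ell(\boldsymbol{\psi}) \geq 0$ attains its minimum value $0$ at $\boldsymbol{\psi} = \boldsymbol{\psi}^{\infty}$, its Hessian there is positive semidefinite, giving $d^{20}\mathcal{Q}(\boldsymbol{\psi}^{\infty}\mid\boldsymbol{\psi}^{\infty}) \succeq d^2\ell(\boldsymbol{\psi}^{\infty}) \succeq 0$. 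Writing $B = d^{20}\mathcal{Q}$ (positive definite by strict convexity) and $C = d^2\ell$, the matrix $B^{-1}C$ is similar to the symmetric $B^{-1/2}CB^{-1/2}$, which satisfies $0 \preceq B^{-1/2}CB^{-1/2} \preceq \mathbf{I}_{LM}$, so its eigenvalues lie in $[0,1]$; hence those of $dM(\boldsymbol{\psi}^{\infty}) = \mathbf{I}_{LM} - B^{-1}C$ lie in $[0,1]$ and are strictly below $1$ at a strict local minimum, where $C \succ 0$.

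The main obstacle is getting the tangency step exactly right: one must verify that the surrogate in \eqref{eq15}--\eqref{eq:minimization_surrogate_fun_2} is genuinely a smooth majorizer matching $\ell$ to first order along the diagonal $\boldsymbol{\psi} = \boldsymbol{\theta}$, so that $\mathcal{Q}(\boldsymbol{\theta}\mid\boldsymbol{\theta})=\ell(\boldsymbol{\theta})$ and $d^{10}\mathcal{Q}(\boldsymbol{\theta}\mid\boldsymbol{\theta})=d\ell(\boldsymbol{\theta})$ hold. This is precisely what the dominating-hyperplane inequality for $\sqrt{x}$ guarantees, since the bounding line is tangent to $\sqrt{x}$ at the expansion point; care is also needed to restrict attention to the active coordinates, away from points where some group norms $\Vert\mathbf{D}_b\boldsymbol{\psi}\Vert_2$ vanish and differentiability fails.
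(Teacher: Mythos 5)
Your proposal is correct and follows essentially the same route as the paper: both specialize Lange's local convergence theory (Propositions 15.3.1--15.3.2 of \citet{lange.2010}), reduce the claim to showing that the eigenvalues of $dM(\boldsymbol{\psi}^{\infty})=\mathbf{I}_{LM}-\big(d^{20}\mathcal{Q}\big)^{-1}d^{2}\ell$ lie in $[0,1)$, and obtain this from $d^{20}\mathcal{Q}\succeq d^{2}\ell\succ 0$ --- the paper via the Rayleigh quotient, you via similarity to $B^{-1/2}CB^{-1/2}$, which is the same computation. Your version is somewhat more self-contained (you derive the formula for $dM$ by implicit differentiation and the tangency identity rather than citing it) and you rightly flag the need to restrict to active coordinates where the group norms do not vanish, a caveat the paper's appeal to ``strict convexity of $\ell$'' glosses over.
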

\begin{proof}
The mapping function in equation \eqref{eq:minimization_surrogate_fun_sol} is differentiable and the surrogate function is strictly convex. Moreover, the second derivative of $\mathcal{Q}(\mathbf{\boldsymbol{\psi}}^{\infty}\vert\mathbf{\boldsymbol{\psi}}^{\infty})$ is invertible for any $\lambda>0$. By Proposition 15.3.1 in \cite{lange.2010} it is then sufficient to show that all the eigenvalues of the differential $dM(\boldsymbol{\psi}^{\infty})$ lie in the half-open interval $[0,1)$. All the eigenvalues of the $dM(\boldsymbol{\psi}^{\infty})$ can be determined as stationary points of the Rayleight quotient $\mathcal{R}_\upsilon(\widehat{\boldsymbol{\psi}}^{(\infty)})=\frac{\upsilon^T(d^{20}\mathcal{Q}(\boldsymbol{\psi}^{\infty}\vert\mathbf{\boldsymbol{\psi}}^{\infty})-d^2\ell(\mathbf{\boldsymbol{\psi}}^{\infty}))\upsilon}{\upsilon^T(d^{20}\mathcal{Q}(\boldsymbol{\psi}^{\infty}\vert\mathbf{\boldsymbol{\psi}}^{\infty})\upsilon}=1-\frac{\upsilon^Td^2\ell(\mathbf{\boldsymbol{\psi}}^{\infty})\upsilon}{\upsilon^T(d^{20}\mathcal{Q}(\boldsymbol{\psi}^{\infty}\vert\mathbf{\boldsymbol{\psi}}^{\infty})\upsilon}$. Moreover, strict convexity of both $\ell(\mathbf{\boldsymbol{\psi}})$ and $\mathcal{Q}(\mathbf{\boldsymbol{\psi}}\vert\widehat{\mathbf{\boldsymbol{\psi}}}^{(k)})$ implies that $d^2\ell(\mathbf{\boldsymbol{\psi}}^{\infty})$ and $d^{20}\mathcal{Q}(\mathbf{\boldsymbol{\psi}}^\infty\vert\mathbf{\boldsymbol{\psi}}^{\infty})$ are positive definite \citep[see][]{polak.1987} and therefore $\mathcal{R}_\upsilon(\widehat{\boldsymbol{\psi}}^{(\infty)})<1$ for any non null vector of length one. Positive semi-definiteness of $d^{20}\mathcal{Q}(\mathbf{\boldsymbol{\psi}}^\infty\vert\mathbf{\boldsymbol{\psi}}^{\infty})-d^2\ell(\mathbf{\boldsymbol{\psi}}^{\infty})$ also implies $\mathcal{R}_\upsilon(\widehat{\boldsymbol{\psi}}^{(\infty)})\geq0$.
\end{proof}
The convergence rate is usually used to characterize the convergence behavior of an iterative algorithm. It is well known that the convergence rate of an MM  algorithm is, in general, linear. If $\{\widehat{\boldsymbol{\psi}}^{(k)}\}$ converges to some optimal point $\boldsymbol{\psi}^\infty$ of $\ell(\boldsymbol{\psi})$ and $M(\boldsymbol{\psi})$ is
continuous, then $\boldsymbol{\psi}^\infty$ is a fixed point and $\boldsymbol{\psi}^\infty=M(\boldsymbol{\psi}^\infty)$. By Taylor expansion, $\widehat{\boldsymbol{\psi}}^{(k+1)}-\boldsymbol{\psi}^{\infty}=dM(\boldsymbol{\psi}^{\infty})(\widehat{\boldsymbol{\psi}}^{(k)}-\boldsymbol{\psi}^{\infty})$ where $dM(\boldsymbol{\psi}^{\infty})$ is the matrix rate of convergence. The spectral radius of $dM(\boldsymbol{\psi}^{\infty})$ is usually defined as the local convergence rate of the sequence \citep[see, e.g.][]{lange.2010,mclachlan_etal.2008}.
The following proposition provides conditions for global convergence of the MM algorithm.
\begin{proposition}
If $\ell(\boldsymbol{\psi})$ is coercive, the subset $\{\boldsymbol{\psi}\in \boldsymbol{\Theta}\backepsilon\ell(\boldsymbol{\psi})\leq\ell(\widehat{\boldsymbol{\psi}}^{(k)})\}$ of parameter domain $\boldsymbol{\Theta}$ is compact and all stationary points of $\ell(\boldsymbol{\psi})$ are isolated. The majorizing function $\mathcal{Q}(\mathbf{\boldsymbol{\psi}}\vert\widehat{\mathbf{\boldsymbol{\psi}}}^{(k)})$ is strictly convex and differentiable in both $\boldsymbol{\psi}$ and $\widehat{\boldsymbol{\psi}}^{(k)}$, then the MM sequence $\{\widehat{\boldsymbol{\psi}}^{(k)}\}$ convergence to the stationary point of $\ell(\boldsymbol{\psi})$. Moreover, since $\ell(\boldsymbol{\psi})$ is strictly convex, then the limiting point of $\{\widehat{\boldsymbol{\psi}}^{(k)}\}$ is the minimum.
\label{prop:convergalg_global}
\end{proposition}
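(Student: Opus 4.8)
The plan is to invoke the standard global convergence theory for descent algorithms (Zangwill's theorem, in the form presented by \cite{lange.2010,lange.2016}), with the monotone descent property of MM supplying the descent function. First I would record the descent inequality. By the domination and tangency properties of the surrogate, $\mathcal{Q}(\boldsymbol{\psi}\vert\widehat{\boldsymbol{\psi}}^{(k)})\geq\ell(\boldsymbol{\psi})$ for all $\boldsymbol{\psi}$ with equality at $\boldsymbol{\psi}=\widehat{\boldsymbol{\psi}}^{(k)}$, and since $\widehat{\boldsymbol{\psi}}^{(k+1)}=M(\widehat{\boldsymbol{\psi}}^{(k)})$ minimizes $\mathcal{Q}(\cdot\vert\widehat{\boldsymbol{\psi}}^{(k)})$, one obtains
\[
\ell(\widehat{\boldsymbol{\psi}}^{(k+1)})\leq\mathcal{Q}(\widehat{\boldsymbol{\psi}}^{(k+1)}\vert\widehat{\boldsymbol{\psi}}^{(k)})\leq\mathcal{Q}(\widehat{\boldsymbol{\psi}}^{(k)}\vert\widehat{\boldsymbol{\psi}}^{(k)})=\ell(\widehat{\boldsymbol{\psi}}^{(k)}),
\]
so $\{\ell(\widehat{\boldsymbol{\psi}}^{(k)})\}$ is non-increasing, and strictly decreasing unless $\widehat{\boldsymbol{\psi}}^{(k)}$ is a fixed point of $M$. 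Coercivity then confines the iterates: because $\ell(\widehat{\boldsymbol{\psi}}^{(k)})\leq\ell(\widehat{\boldsymbol{\psi}}^{(0)})$ for every $k$, the whole sequence lies in the sublevel set $\{\boldsymbol{\psi}:\ell(\boldsymbol{\psi})\leq\ell(\widehat{\boldsymbol{\psi}}^{(0)})\}$, which is compact by hypothesis. Hence the sequence is bounded and admits at least one limit point, all of which lie in this compact set.

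Second, I would verify the hypotheses of the global convergence theorem. Strict convexity and differentiability of $\mathcal{Q}(\cdot\vert\widehat{\boldsymbol{\psi}}^{(k)})$ guarantee a \emph{unique} minimizer, so the algorithm map $M$ is a well-defined point-to-point map; its continuity follows from the continuous dependence of this unique minimizer on the conditioning argument, obtained by applying the implicit function theorem to the stationarity condition $\nabla_{\boldsymbol{\psi}}\mathcal{Q}=0$, whose Jacobian $d^{20}\mathcal{Q}$ is invertible by strict convexity (exactly the invertibility already used in Proposition~\ref{prop:spectralradius}). With $\ell$ serving as a descent function that strictly decreases away from fixed points, Zangwill's theorem then ensures that every limit point of $\{\widehat{\boldsymbol{\psi}}^{(k)}\}$ is a fixed point of $M$, equivalently a stationary point of $\ell$.

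Third, I would upgrade subsequential convergence to convergence of the full sequence and identify the limit. From strong convexity of $\ell$ together with the telescoping descent inequality one gets $\Vert\widehat{\boldsymbol{\psi}}^{(k+1)}-\widehat{\boldsymbol{\psi}}^{(k)}\Vert\to0$, which forces the set of limit points to be connected; combined with the assumed isolation of stationary points, connectedness collapses the limit set to a single point, so $\widehat{\boldsymbol{\psi}}^{(k)}$ converges to one stationary point. The closing assertion is then immediate: a strictly convex coercive $\ell$ has a unique stationary point coinciding with its global minimizer, so the limit is the minimum.

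The main obstacle I anticipate is the rigorous justification of the continuity of $M$ and of the connectedness of the limit set. Continuity is delicate because $M$ is defined implicitly as an argmin whose closed-form update \eqref{eq:minimization_surrogate_fun_sol} involves weights $\widehat{d}_b^{(k)}$ that can blow up as groups of coefficients collapse to zero; I would address this using the Sherman--Morrison--Woodbury reformulation, which keeps the effective update stable precisely on such collapsing groups, so that $M$ remains continuous on the relevant compact sublevel set. The connectedness step is the other sensitive point, and I would secure it by deriving the vanishing-increment bound $\Vert\widehat{\boldsymbol{\psi}}^{(k+1)}-\widehat{\boldsymbol{\psi}}^{(k)}\Vert\to0$ from the strong convexity of $\ell$ rather than relying on it implicitly.
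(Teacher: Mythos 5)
Your overall architecture is the same as the paper's: both arguments are instances of the Lange/Zangwill global-convergence framework --- the MM descent property confines the iterates to a compact sublevel set, every cluster point is a stationary point of $\ell$ (the paper cites Lange's Liapunov theorem, Proposition 15.4.1, where you invoke Zangwill directly), the cluster set is connected, and connectedness together with the assumed isolation of stationary points collapses it to a singleton, identified as the minimizer by strict convexity. You are in fact more explicit than the paper on two points it leaves implicit: the derivation of the descent inequality from domination and tangency, and the continuity of the algorithm map $M$ (which both you and the paper must ultimately confront near points where a group norm $\Vert \mathbf{D}_b\boldsymbol{\psi}\Vert_2$ vanishes and the surrogate weights degenerate; you at least flag this).

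The one step that would fail as written is your derivation of $\Vert\widehat{\boldsymbol{\psi}}^{(k+1)}-\widehat{\boldsymbol{\psi}}^{(k)}\Vert\to 0$ ``from strong convexity of $\ell$.'' Strong convexity of $\ell$ is neither assumed (the proposition only posits strict convexity, and in the regime $LM>nG$ the quadratic term $\mathbf{Z}^T\mathbf{Z}$ is rank deficient) nor sufficient: since $\widehat{\boldsymbol{\psi}}^{(k+1)}$ minimizes $\mathcal{Q}(\cdot\vert\widehat{\boldsymbol{\psi}}^{(k)})$ and not $\ell$, the strong-convexity inequality for $\ell$ carries an uncancelled gradient term and does not lower-bound the per-iteration decrease by $\Vert\widehat{\boldsymbol{\psi}}^{(k+1)}-\widehat{\boldsymbol{\psi}}^{(k)}\Vert^2$. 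The correct route is through the surrogate: $\ell(\widehat{\boldsymbol{\psi}}^{(k)})-\ell(\widehat{\boldsymbol{\psi}}^{(k+1)})\geq \mathcal{Q}(\widehat{\boldsymbol{\psi}}^{(k)}\vert\widehat{\boldsymbol{\psi}}^{(k)})-\mathcal{Q}(\widehat{\boldsymbol{\psi}}^{(k+1)}\vert\widehat{\boldsymbol{\psi}}^{(k)})\geq \tfrac{\mu}{2}\Vert\widehat{\boldsymbol{\psi}}^{(k)}-\widehat{\boldsymbol{\psi}}^{(k+1)}\Vert^2$, where $\mu>0$ is a uniform strong-convexity modulus of the quadratic surrogates over the compact sublevel set, and then telescoping gives the vanishing increments. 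This is exactly the content of the paper's appeal to Lange's Proposition 8.2.1, so the repair is routine, but as stated your justification of the connectedness step is the one genuine hole in the proposal.
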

\begin{proof}
For any fixed $\lambda>0$, the objective function $\ell(\boldsymbol{\psi})$ is convex with one bounded local minimizer (see Proposition \ref{th:existence}), which implies that $\ell(\boldsymbol{\psi})$ is coercive, \citep[see][]{lange.2010,lange.2016}. The convexity of $\ell(\boldsymbol{\psi})$ further implies that $\ell(\boldsymbol{\psi})$ is Lipschitz continuous on each compact subset of $\mathbb{R}^{LM}$ (i.e. locally Lipschitz continuous). It follows that the gradient of $\ell(\boldsymbol{\psi})$ exists for almost all $\boldsymbol{\psi}$. Moreover, by the Liapunov theorem \citep[Proposition 15.4.1]{lange.2010} the set $\mathcal{D}$ of clustering points generated by the sequence $\{\widehat{\boldsymbol{\psi}}^{(k)}\}$, for $k=0,1,\dots$, is contained in the set $\mathcal{S}$ of stationary points of $\ell(\boldsymbol{\psi})$. Then, according to \citep[Proposition 8.2.1]{lange.2010}, $\mathcal{D}$ is a connected set since any closed subset of a compact set is also compact. The condition that all stationary points of $\ell(\boldsymbol{\psi})$ are isolated easily implies that the number of stationary points in the compact set $\{\boldsymbol{\psi}\in \boldsymbol{\Psi}\backepsilon\ell(\boldsymbol{\psi})\leq\ell(\widehat{\boldsymbol{\psi}}^{(k)})\}$ can only be finite and since the cluster set $\mathcal{D}$ is a connected subset of a finite set $\mathcal{S}$, $\mathcal{D}$ reduces to a singleton.
\end{proof}

\subsection*{Proofs of Section 4}

\begin{proof}[Remark	\ref{rem:concurrentseidentro}]
\[
\mu( \bar \F_0 \backslash \tilde \F_0^\text{concurrent}) = 
\mu( \bar \F_0) = 
(M-d+2) \frac{c}{(M-d+2)^2 } = c(M-d+2)^{-1}.
\]
\end{proof}      
	
	\begin{proof}[Theorem \ref{th:existence}]
		Using Proposition 1  of \citet{jenatton2011structured} and noting that the $B$-th element of the second sum in \eqref{eq:min3_init} is the norm of all the coefficients $\psi_{ml}$, the minimization in \eqref{eq:min3_init} leads to a unique set of coefficients $\widehat{\psi}_{ml}$. 
		Since $d$ and the knots defining the bases $\{\varphi_m(t), m=1,2,\dots,M\}$ and $\{\theta_l(s), l=1,2,\dots,L\}$ are fixed each B-spline basis represents a vector space of piecewise polynomials. Since their tensor product also generates a vector space, the elements of the tensor product basis are also linearly independent and so the coefficients $\psi_{ml}$ of any given kernel $\psi(t,s)$ in this vector space are uniquely determined. Thus, the estimator  $\widehat \psi_\lambda$ obtained by the minimization of \eqref{eq:min3_init} is unique. 
	\end{proof}
	
	\begin{proof}[Theorem \ref{th:consistency1}]
		We first note that if $\psi_T \in V_{\varphi \otimes \theta}$, there exists a unique true $\Psib_T$ and thus  the event $E_0$ is equivalent to consistently estimate the non-zero patterns in $\Psib$. To show this, it is sufficient to rely on  Theorem 6 of \citet{jenatton2011structured}. The rest of the proof consists in noting  that the conditions required by  Theorem 6 of \citet{jenatton2011structured} are satisfied under our set of assumptions. 
	\end{proof}

	\begin{proof}[Theorem \ref{th:consistency2}]
		We study the consistency of $\widehat{\psi}_\lambda(t,s)$ with respect to $\psi_T^P(t,s)$ exploiting its representation through the vector space $V_{\varphi \otimes \theta}$ and using the results of Theorem 7 of \citet{jenatton2011structured}. Eventually the consistency of $\widehat{\psi}_\lambda(t,s)$ with respect to the true $\psi_T(t,s)$ is obtained letting the number of B-splines basis to grow with $n$ since if both $L$ and $M$ are increasing we have that the event $E_k$ approaches $E_0$ and $\int [\psi_T^P(t,s) - \psi_T(t,s)]^2 dt ds \to 0$.
		%
		Note that 
		for any ${\F_1 \subset  \F_{0T}^C} $ we also have ${\F_1 \subset \bar{\F}^C_{0T}} $ since $\bar \F_{0T}$ is a subset of $\F_{0T}$. Thus the  correct estimation of $\Psib_T^P$  automatically leads to 
		\[
		\left(\inf_{\F_1 \subset \F_{0T}^C} 
		\int_{\F_{1}} \left[\widehat{\psi}_\lambda (t,s)\right]^2 dt ds\right) >0.
		\]
		Despite correctly estimating $\Psib_T^P$, however, it may happen that $\widehat{\psi}_\lambda(t,s)\neq 0$ for $(t,s) \in \F_{0T}$ for subregions of a finite number of sets $\F_{ml}$ in \eqref{eq:nullregion} of Lebesgue measure $(LM)^{-1}$. Since $\widehat{\psi}_\lambda(t,s)$ is bounded  we can  conclude that a correct estimation of the non zero patterns in $\Psib_T^P$ would also lead to  
		\[
		\int_{\F_{0T}} \left[ \widehat{\psi}_\lambda (t,s)\right]^2 dt ds \leq \frac{k}{LM}.
		\]
		Hence, similarly to  Theorem \ref{th:consistencyfixeddim} we focus on the consistent estimation of  the non-zero patterns in $\Psib^P_T$. To this end, we exploit Theorem 7 of \citet{jenatton2011structured} and show how the conditions required therein translate to our settings.
		We first sightly rewrite the minimization in \eqref{eq:min3_init} as
		\begin{equation}
		\widehat \bpsi_\lambda = \arg \min_{\bpsi} \left\{\frac{1}{2} || \mathbf{y} - {\tilde{\mathbf{Z}}} \bpsi||_2^2 + \lambda \sum_{b=1}^{B+1}  ||  \tilde{c}_{b} \odot \bpsi ||_2 \right\} \odot \mbox{V}_{{\mathbf{Z}}}^{-1},
		\label{eq:min4}
		\end{equation}
		where ${\tilde{\mathbf{Z}}}$ is obtained dividing each column of ${\mathbf{Z}}$ by its standard deviation. Consistently with this operation we are intrinsically rescaling $\bpsi$ and thus the final estimator is multiplied through the Hadamart product for the inverse of $\mbox{V}_{{\mathbf{Z}}}$ which is the vector containing the standard deviations of  ${\mathbf{Z}}$.
		To maintain the penalization consistent with our settings, the weights inside the norm are also redefined as $\tilde c_b = c_b \odot \mbox{V}_{\mathbf{Z}}^{-1}$.
		With this operation, the minimization is conducted on a matrix ${\tilde{\mathbf{Z}}}$ such that ${\tilde{\mathbf{Z}}}^{T}{\tilde{\mathbf{Z}}}$ has unit diagonal. 
		In addition, let $\kappa$ be the minimum eigenvalue of the matrix 
		${\tilde{\mathbf{Z}_1}}^{T}{\tilde{\mathbf{Z}}_1}$. 	  From condition \eqref{eq:sparsitycondition}  the number of nonzero coefficients in $\Psib_T^P$ is strictly less than $nG$ and then 
		${\tilde{\mathbf{Z}_1}}^{T}{\tilde{\mathbf{Z}_1}}$
		is positive definite and $\kappa >0$. 
		
		Hence, defining the constants  $C_1, C_2, C_3,$ and $C_4$ consistently with \citet{jenatton2011structured} all the requirements of their Theorem 7 are satisfied. Hence the proof.
	\end{proof}

\end{document}